\tikzstyle{vertex}=[circle, draw, inner sep=0pt, minimum size=4pt,outer sep = 1pt]
	\newcommand{\mypar}[1]{\smallskip
		
	\noindent\textbf{#1}}
\newtheorem{proposition}{Proposition}
\newtheorem{observation}{Observation}
\newenvironment{appdxProp}[1]{
{\it Proposition \ref{#1}}\\%
}
{
\smallskip
}
\newcommand{\ruleatom}{\mathop{{:}-}}
\newcommand{\wc}{\mathop{:\sim}}
\newenvironment{myitemize}
{\begin{list}{$-$}{%
\setlength{\topsep}{2pt} 
\setlength{\leftmargin}{0pt} 
\setlength{\itemindent}{10pt}}} 
{\end{list}}
\newif\ifappendix\appendixtrue
\title{Deontic Paradoxes in ASP with Weak Constraints\footnote{ Work partially supported by the WWTF project TAIGER (ICT22-026).}}
\author{Christian Hatschka\qquad\qquad Agata Ciabattoni\qquad\qquad Thomas Eiter
\institute{Institute of Logic and Computation, TU Wien, Vienna, Austria}
\email{firstname.lastname@tuwien.ac.at}
}
\newcommand{\titlerunning}{Deontic Paradoxes in ASP}
\newcommand{\authorrunning}{C.Hatschka, A.Ciabattoni \& T.Eiter}
\begin{document}

\maketitle              

\begin{abstract}
The rise of powerful AI technology for a range of applications that are sensitive to legal, social, and ethical norms demands decision-making support in presence of norms and regulations.  
Normative reasoning is the realm of deontic logics, 
  that are challenged by well-known benchmark problems (deontic paradoxes), and lack efficient computational tools. 
  In this paper, we use Answer Set Programming (ASP) for addressing these shortcomings and 
  showcase how to encode and resolve several well-known deontic paradoxes utilizing weak constraints.
  %
  By abstracting and generalizing this encoding, we present a methodology for translating normative systems in ASP with weak constraints. This methodology is applied to "ethical" versions of Pac-man, 
%
where we obtain a comparable performance with related works, but ethically preferable results.
\end{abstract}

\section{Introduction}

Norms, which involve concepts such as obligation and permission, are
an integral part of human society.  They are enormously important in a
variety of fields -- from law and ethics to artificial intelligence
(AI).  
In particular, imposing norms -- be they ethical, legal or
social -- on AI systems is crucial, as these systems have become
ubiquitous in our daily 
routines. 

A main difference between norms and other constraints lies in the fact that norms typically allow for the possibility of violation.
Reasoning with and about norms
({\it normative reasoning}) requires deontic logic, the branch of logic that
deal with obligation and related concepts.  
Normative reasoning comes with a variety of idiosyncratic challenges, which are
often exemplified by benchmark examples (so called deontic paradoxes).  A crucial challenge is reasoning about sub-ideal situations, such as
{\em contrary-to-duty (CTD) obligations}, which are obligations only triggered by a violation.  Other challenges are associated, e.g., with defeasibility issues (norms having different
priorities, exceptions, etc.). 

The first deontic system introduced --
Standard Deontic Logic~\cite{Wri51} -- was failing most of the
benchmark examples, (un)deriving formulas which are counterintuitive
in a common-sense reading.  This has motivated the introduction of a
plethora of deontic logics, see, e.g.\ 
\cite{Hand}.
These logics have been investigated
mainly in connection with philosophy and legal reasoning, and with the
exception of Defeasible Deontic Logic DDL~\cite{DDL1,DDL2}, they lack
defeasibility and efficient provers.  Defeasibility and efficient
reasoning methods are instead offered by Answer Set Programming (ASP),
which is one of the most successful paradigms of knowledge representation and
reasoning for declarative problem solving~\cite{brew-etal-11-asp}. Indeed, in a long and systematic effort
of the knowledge representation community, efficient solvers for fast evaluation of ASP programs have been developed, see, e.g.,~\cite{DBLP:journals/aim/KaufmannLPS16}.
Defeasibility is also inherent in 
ASP, due to its default-negation and also weak constraints. 
This paper introduces a method for using weak constraints for encoding norms in ASP.
We first translate desired basic properties of deontic operators in a 
common core that will be used in all further encodings.
These properties are established by 
analyzing multiple well-known deontic paradoxes (e.g., 
\textit{Ross Paradox}, the \textit{Fence Scenario},\ldots).
By abstracting and generalising the encodings of the 
specific paradoxes, we provide a 
methodology for encoding 
normative systems in ASP with weak constraints. 
The methodology is put to work
on a case study from~\cite{N2019,CADE} that involves a
reinforcement learning agent playing a variant of the Pac-man
video game with additional ``ethical'' rules.
Our encoding is used as a ``shield" to filter out the non compliant actions of Pac-man and the outcome is compared with~\cite{CADE}, that uses DDL. 
For space reason, we must omit details, for which we refer
to~\cite{Master}.

\section{Preliminaries}


\paragraph{\bf Answer Set Programming.} 
%
%
 %
We consider extended logic programs with disjunction~\cite{gelf-lifs-91,leon-etal-2002-dlv}, which are finite sets of rules $r$ (henceforth referred to as programs)
\begin{align}
	H_1\lor \ldots \lor H_l \ruleatom A_1,\ldots A_n,\textit{ not }B_1,\ldots,\textit{ not }B_m. \quad l,m,n\geq 0, \label{rule}
\end{align}
where all $H_i$, $A_j$,  and $B_k$ are literals in a first-order language. Here {\it not} denotes weak (default) negation and $\neg$ (also written $-$) strong negation. Informally, $r$ can be read as: ``If all $A_i$ are true and for no $B_j$ there is evidence that it is true, at least one of $H_1,\ldots,H_l$ must be true''.

The answer sets of a ground (variable-free) program $\Pi$ are given in terms of 
consistent sets $S$ of ground literals as follows. Let $S\models r$ denote that 
$S\models B(r)$ implies $S\models H(r)$, where $S\models B(r)$ denotes that
$\{A_1,\ldots,A_n\}\subseteq S$ and $\{B_1,\ldots,B_m\}\cap S=\emptyset$, and $S\models H(r)$ denotes that $\{H_1\ldots,H_l\}\cap S\neq\emptyset$. $S$ is an answer set of $\Pi$ if $S$ satisfies all rules $r$ in $\Pi(S) = \{ r\in \Pi \mid S\models B(r)\}$,
and no $S'\subset S$ satisfies $\Pi(S)$. The answer sets of a general program $\Pi$ are those of its grounding $grd(\Pi)$ that consists of all ground instances of its rules.
%
%
For modeling defeasibility of obligations, we use weak constraints  of the form
\begin{align*}
	\wc  A_1,\ldots,A_n.\ [w:l]
\end{align*} 
(as in the system DLV\cite{leon-etal-2002-dlv}) where $A_1,\ldots,A_n$ are literals (that may be weakly negated) and $w,l\geq 0$ are the weight and the (integer) level of the weak constraint, respectively. Informally, weak constraints 
single out optimal answer sets that have minimal total weights of violated weak constraints with higher levels being more important; see \cite{leon-etal-2002-dlv} for formal definitions and more background.

\paragraph{\bf SDL.} Standard Deontic Logic~\cite{Wri51}, 
is the best known system of deontic logic. 

SDL formulas are constructed by the following grammar ($\mathcal{A}$ is the set of atomic propositions):\begin{align*}
	\varphi:=p\in\mathcal{A}\mid\neg\varphi\mid(\varphi)\mid\varphi\vee\varphi\mid\varphi\wedge\varphi\mid\varphi\rightarrow\varphi\mid O\varphi\mid P\varphi\mid F\varphi
\end{align*} 
SDL is a monadic deontic logic, as the operators $O$ (obligation), $P$ (permission) and $F$ (prohibition) apply to single formulas; they are read as ``it is obligatory that $\varphi$'',  ``it is permissible that $\varphi$'', and ``it is forbidden that $\varphi$'', resp., and inter-definable, e.g., $P\varphi := \neg O(\neg \varphi)$, and
$F\varphi := O\neg \varphi$.

The semantics of SDL 
-- also known as the modal logic KD --
is based on possible worlds,
where the accessibility relation is serial. A 
Hilbert system for SDL is obtained by adding the following axiom-schemata and rules to any axiomatization of classical propositional logic:
\begin{align*}
	& \text{If }\varphi\text{ is a theorem, }O\varphi\text{ is a theorem} && \text{(RND}) & &&\\
     & O(\varphi\rightarrow \psi)\rightarrow(O\varphi\rightarrow O\psi) && \text{(KD)}  & \phantom{1pt}O\varphi\rightarrow\neg O\neg \varphi && \text{(DD)}	
\end{align*}
In the following we will also use the derived axiom and rule:
\begin{align*}
  &  \text{If } \varphi \rightarrow \psi \text{ is a theorem, }O\varphi\rightarrow O\psi\text{ is a theorem} && \text{(RMD)} & \neg O\bot && \text{(OD)} 
\end{align*}


For more details about SDL and other deontic logics, see, e.g.,
\cite{Hand}


\section{Deontic Paradoxes} 
\label{Paradoxes}

As the foundation for our work, we examine several deontic paradoxes.
These consist of (un)derivable
formulas which are counter-intuitive 
when viewed from a common-sense perspective.
While referred to in the literature as paradoxes, many of 
them are not paradoxes \textit{per se}, 
but rather puzzles or dilemmas.
Deontic paradoxes play an important role in deontic logic and
normative reasoning; they serve as sanity checks for existing systems, and as driving force for defining  new systems. In particular, they 
exemplify 
that SDL fails to capture the nuances of normative reasoning expected in certain scenarios.
There are many such paradoxes. We categorise them according to the reason for their failure, see e.g.\
\cite{CarmoClass,Hand},  and analyze one example for each class:

\begin{enumerate}
	\item {Paradoxes centering around RMD:}
\textit{Ross's Paradox}, \textit{Good Samaritan Paradox}, \textit{\AA qvist's Paradox of Epistemic Obligation}
	\item {Puzzles centering around DD and OD:}
    \textit{Sartre's Dilemma}, \textit{Plato's Dilemma}
	\item {Puzzles centering around deontic conditionals:}
		\textit{Broome's Counterexample}, \textit{Chisholm's Contrary-to-Duty Paradox},
        \textit{Forrester's Paradox}, \textit{Considerate Assassin Paradox}, \textit{Asparagus Paradox},
        \textit{Fence Paradox}, \textit{Alternative Service Paradox}
\end{enumerate}
Deontic conditionals refer to obligations that arise situationally; written as $O(A/B)$ (to be read as ``$A$ is obligatory if $B$'') they have been introduced to cope with contrary-to-duty obligations, i.e., obligations which come into force when another obligation is violated.

\mypar{Paradoxes centering around RMD}
show, in general,  that SDL is too strong as it derives unwanted consequences.
An example is \emph{Ross's Paradox}, which consists of the following two sentences:
\begin{align*}
	&\text{It is obligatory that the letter is mailed.} 
 \tag{R1}\\
	&\text{It is obligatory that the letter is mailed or burned.} 
  \tag{R2}
\end{align*}
Let $O(m)$ and $O(m \vee b)$ formalize (R1) and (R2), respectively.
As $m\rightarrow(m\vee b)$ is a theorem in SDL,  
$O(m \vee b)$ follows from $O(m)$ by RMD and modus ponens. But it seems counterintuitive to derive an obligation that is satisfied by burning the letter, when failing to mail the letter.


\mypar{Puzzles centering around DD and OD} 
involve conflicting obligations that cannot be obeyed. 
An example is \textit{Plato's Dilemma}:
\begin{align*}
	&\text{It is obligatory that I meet my friend for dinner.}\tag{P1}\\
	&\text{It is obligatory that I rush my child to the hospital.}\tag{P2}
\end{align*}
Clearly, it is not possible to satisfy both obligations at the same time.
Using common sense reasoning, P2 should override P1, but in SDL the presence of two contradictory obligations  would make everything obligatory (deontic explosion). 

\mypar{Puzzles centering around deontic conditionals}
have as a prominent example the \textit{Fence Scenario}~\cite{CAP}, which combines two different weaknesses of SDL regarding CTD obligations and
exceptions:
\begin{align*}
	&\text{There must be no fence.}\tag{F1}\\
	&\text{If there is a fence then it must be a white fence.}\tag{F2}\\
	&\text{If the cottage is by the sea, there may be a fence.}\tag{F3}
\end{align*}
Here (F2) serves as a 
CTD obligation that is active
when 
(F1) is violated, while (F3) serves as an
exception to 
(F1). Note that under this
interpretation, if the cottage is by the sea the fence need not be white. 
The contrary-to-duty obligation (F2) cannot be formalised in SDL: having a white fence implies
having a fence, and by (RND) the obligation to have a fence; this
contradicts (F1). Moreover, as SDL lacks expressing defeasibility, (F3) cannot be properly formalized. 

\section{Encoding the Paradoxes}
\label{sec:encode-paradoxes}

We now proceed to encode the paradoxes from above All encodings share the same
common core, shown in Figure~\ref{fig: Core}, that encodes properties of SDL,
using the following predicates:
\begin{itemize}
	\item $O(X)$ resp.\ $F(X)$  denotes that $X$ is obligatory resp.\ forbidden;
	\item $act(X)$ denotes that $X$ is eligible for reasoning about whether $X$ is obligatory or not. 
    While an action by default, in some cases $X$ may not materialize but viewed as such. 
   An example from above would be owning a white fence, as we reason about whether it is obligatory.
	\item $Do(X)$ denotes that the agent has chosen to take the
          action $X$, and $-Do(X)$ denotes that the agent will
          definitely not take the action $X$.
	\item $Dia(X)$ is an auxiliary predicate to denote that an action $X$ is an option resp. possible (in the sense on modal logic).
    Thus, $-Dia(X)$ can either mean that the agent cannot take the action or that the agent has chosen not to take the action.\todo{Should we explain this further. TE: not sure. Perhaps we say "action $X$ is an option resp.\ possible" then this could be clearer. }
	\item $Happens(X)$ is an auxiliary predicate that denotes an event $X$ happening. It is sometimes used in encodings to denote events happening which are usually outside the agents control.
\end{itemize}
 \begin{figure}[t]\caption{The common core of our encodings}
 \centering
 		\boxed{\begin{aligned}
 				&O(X)\vee -O(X)\ruleatom act(X).&(1)&& \ruleatom F(X), Do(X). &&(7)\\
 				&F(X)\vee-F(X)\ruleatom act(X).&(2) &\qquad &Happens(X)\ruleatom Do(X).&&(8)\\
 				&\ruleatom O(X),-Dia(X).&(3) &
                &\ruleatom Do(X),-Dia(X).&&(9)\\
 				&-Dia(X)\ruleatom -Do(X), act(X).&(4)&
                 &\wc  O(X). [1:1]&&(10)\\
 				&\ruleatom O(X), F(X).&(5)& 
                &\wc  F(X). [1:1]&& (11)\\
 				&Do(X)\vee-Do(X)\ruleatom  act(X).&(6)&& && \\
 		\end{aligned}}
\label{fig: Core}
 \end{figure}
Intuitively, the common core guesses whether something is obligatory $(1)$, forbidden $(2)$ and whether the agent takes the action $(6)$. The remaining rules then encode connections between predicates and exclude answer sets that we deem inconsistent, e.g., something being obligatory and forbidden or something being obligatory and
that action not taken. The weak constraints $(10)$ and $(11)$ are used to eliminate answer sets that derive obligations/prohibitions with no need.


We note that the common core is not a faithful encoding of full SDL and its axioms: 
while theoretically possible, this would be undesired as SDL axioms lead to multiple paradoxes. Instead, the common core captures some of the SDL axioms, while satisfactory handling the deontic paradoxes. For instance
 
\newcommand{\propDD}{The SDL axiom $DD$ holds in the common core.}
\begin{proposition}\label{prop:DD} 
\propDD
\end{proposition}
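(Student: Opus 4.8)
The plan is first to pin down what it means for the semantic axiom $DD$ to ``hold'' in the common core, since the core does not manipulate SDL formulas but instead uses the predicates $O(\cdot)$ and $F(\cdot)$. Recalling the interdefinability $F\varphi := O\neg\varphi$ from the preliminaries, the natural reading of $DD$, namely $O\varphi\rightarrow\neg O\neg\varphi$, is that $O(X)$ and $O(\neg X)$ are never derived together; under the encoding this becomes the statement that no answer set $S$ of the core (together with any facts placed on top of it) contains both $O(X)$ and $F(X)$ for the same $X$. So I would take the claim to mean precisely: in every answer set, $O(X)$ and $F(X)$ do not coexist.

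Once the statement is phrased this way, the argument reduces to rule $(5)$, $\ruleatom O(X), F(X)$. I would proceed by contradiction: suppose some answer set $S$ contains both $O(X)$ and $F(X)$ for some $X$, and let $r$ be the ground instance of $(5)$ for that $X$. Then $\{O(X),F(X)\}\subseteq S$ gives $S\models B(r)$, so $r\in\Pi(S)$; but $r$ has an empty head, so $S\models H(r)$ fails by the definition of $\models$, whence $S\not\models r$. This contradicts the requirement that an answer set satisfy every rule of $\Pi(S)$. Hence no answer set can contain both literals, which is exactly $DD$ under the reading above.

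I expect the ASP mechanics to be entirely routine here — a hard integrity constraint with these two body literals simply discards every candidate model that violates it. The only genuinely delicate point is the bridge between the object level and the program level: one must justify that $F(X)$ is the faithful stand-in for $O(\neg X)$, and that ``the axiom holds'' should be understood as non-coexistence of the two literals across all answer sets, rather than as the derivability of an implication within a single fixed model. Getting this correspondence stated cleanly is the main obstacle; after that, rule $(5)$ does all the work.
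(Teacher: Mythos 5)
Your proposal is correct and follows essentially the same route as the paper: both identify rule $(5)$ as the formalization of $DD$ via the interdefinability $F\varphi := O\neg\varphi$, which turns $O\varphi\rightarrow\neg O\neg\varphi$ into the non-coexistence $\neg(O\varphi\wedge F\varphi)$ enforced by the integrity constraint. Your additional unpacking of the ASP semantics (empty head, body satisfied, hence no answer set) is just the routine justification the paper leaves implicit.
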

\begin{proof}
 $DD$ is formalized in rule $(5)$, that forbids an action from being both forbidden and obligatory, as $O\varphi\rightarrow\neg O\neg\varphi$ which is equivalent to $\neg O\varphi\vee\neg F\varphi$, and to $\neg(O\varphi\wedge F\varphi)$.
\end{proof}

The idea behind the common core is to generate all maximal sets of non-auxiliary predicates that are consistent and filter out suboptimal answer sets using weak constraints. We require that in a consistent set an action is not obligatory and forbidden at the same time and any obligatory action is taken (resp. any forbidden action is not taken).\todo{Mention the auxiliary predicates as well? TE: don't think so.} 
For instance, disregarding the auxiliary predicates $Dia$, $Happens$ and $act$, the following are all maximal consistent sets of deontic predicates:
\begin{align*}
					&\{O(action), -F(action), Do(action)\},\{F(action), -O(action), -Do(action)\},\\
                    &\{-F(action), -O(action), -Do(action)\},\{-F(action), -O(action), Do(action)\}
				\end{align*}

The next proposition can be seen as a soundness and completeness result for the common core.

\newcommand{\propCorrect}{The rules $(1)$--$(9)$ from Figure~\ref{fig: Core} allow for all and only the maximal consistent sets of deontic predicates as answer sets.
}

\begin{proposition}\label{prop:correct} 
\propCorrect
\end{proposition}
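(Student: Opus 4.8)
The plan is to exploit a structural observation: rules $(1)$--$(9)$ contain no default negation, only strong negation. Treating each strongly negated literal as a fresh atom subject to the standing requirement that an answer set be a consistent set of literals, the program is therefore a \emph{positive} disjunctive program, with the integrity constraints $(3),(5),(7),(9)$ read as rules forbidding their bodies. For such programs the answer sets are exactly the $\subseteq$-minimal consistent models. I would record this reduction first, since it replaces the stability bookkeeping of the general definition by plain minimal-model reasoning, and I would fix the set $A=\{X\mid act(X)\}$ of actions supplied as facts: since every guessing rule has $act(X)$ in its body, no deontic literal can be introduced for $X\notin A$, so the statement is understood relative to a given $A$ and answer sets are identified with their restriction to the deontic predicates.

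For the \emph{only} direction I would take an answer set $S$ and restrict it to the deontic predicates. For each $X\in A$ the disjunctive rules $(1),(2),(6)$ force at least one literal of each pair $\{O(X),-O(X)\}$, $\{F(X),-F(X)\}$, $\{Do(X),-Do(X)\}$ into $S$, while consistency forbids both; hence exactly one is present and, since minimality rules out deontic atoms about any $Y\notin A$, the restriction is a \emph{complete} assignment over $A$. I would then verify the three consistency conditions: $(5)$ excludes $O(X)\wedge F(X)$; $(7)$ excludes $F(X)\wedge Do(X)$, giving $F(X)\Rightarrow -Do(X)$; and $O(X)\Rightarrow Do(X)$ holds indirectly, because if $-Do(X)$ were present then $(4)$ together with $act(X)$ would derive $-Dia(X)$, which with $O(X)$ violates the constraint $(3)$. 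A complete, consistent assignment is precisely a maximal consistent set, so $S$ projects onto one.

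For the \emph{all} direction I would start from a maximal consistent set $M$ over $A$ and define the candidate $S=M\cup\{act(X)\mid X\in A\}\cup\{-Dia(X)\mid -Do(X)\in M\}\cup\{Happens(X)\mid Do(X)\in M\}$, adding exactly the auxiliary atoms forced by $(4)$ and $(8)$. I would check that $S$ is consistent (the atom $Dia(X)$ is never derivable, so it never clashes with $-Dia(X)$) and that it satisfies every rule, reusing the case analysis above for the constraints. Minimality follows by showing any model $S'\subseteq S$ equals $S$: the facts force every $act(X)$ into $S'$; each disjunction $(1),(2),(6)$ then forces the unique pair-member present in $S$ (its complement being absent from $S$, hence from $S'$); and finally $(4)$ and $(8)$ force back precisely the auxiliary atoms of $S$. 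Thus $S$ is a minimal consistent model, i.e.\ an answer set projecting to $M$.

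The step I expect to be the main obstacle is conceptual rather than computational: reconciling the \emph{minimality} of answer sets with the \emph{maximality} of the target sets. The resolution is that the guessing disjunctions make every deontic predicate decided in any model, so a $\subseteq$-minimal model still assigns a truth value to each predicate and therefore projects to a set that is maximal \emph{among consistent sets of deontic literals}; I would state this explicitly to avoid the apparent tension. The secondary care point is the indirect encoding of $O(X)\Rightarrow Do(X)$ through $(3)$, $(4)$ and the auxiliary predicate $Dia$, which I would spell out rather than treat as a direct constraint.
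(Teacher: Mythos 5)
Your proof is correct, but it takes a genuinely different and more rigorous route than the paper's. The paper argues by enumeration for a single generic action: it lists the four target maximal consistent sets, traces how the guessing rules $(1)$, $(2)$, $(6)$ interact with $(3)$--$(5)$ and $(7)$--$(9)$ to generate exactly these answer sets, then lists the five inconsistent literal combinations and names the rule excluding each; minimality is never argued formally, and the text explicitly appeals to inspecting solver output. Your proof rests on the same rule-level facts --- $(5)$ excludes $O(X)\wedge F(X)$, $(7)$ excludes $F(X)\wedge Do(X)$, and $O(X)\Rightarrow Do(X)$ arrives indirectly via $(4)$ feeding $-Dia(X)$ into constraint $(3)$ --- but wraps them in a structural frame absent from the paper: since $(1)$--$(9)$ contain no default negation, they form a positive disjunctive program whose answer sets are exactly the $\subseteq$-minimal consistent models, which licenses a clean bidirectional argument (projection of an arbitrary answer set onto a complete, hence maximal, consistent assignment in one direction; an explicit candidate construction plus minimality check in the other). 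This buys you a proof that works uniformly over any set of declared actions rather than one generic action, a formal treatment of minimality, and an explicit resolution of the minimality-versus-maximality tension that the paper leaves implicit; what the paper's enumeration buys is concreteness, since the reader sees the four answer sets verbatim. One point you should spell out rather than assert: the paper's (slightly non-standard) definition of answer sets via $\Pi(S)$ coincides with minimal-model semantics precisely because the program is positive --- for $S'\subseteq S$, any rule body satisfied by $S'$ is also satisfied by $S$, so minimality with respect to $\Pi(S)$ and with respect to $\Pi$ agree --- and your opening reduction needs this one-line justification to be airtight.
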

This can be seen by inspecting the answers sets output by a solver. 
Intuitively, soundness of the rules $(1)$ to $(9)$ is achieved, as inconsistent answer sets are excluded. Completeness on the other hand is given, as all answer sets that are considered consistent for an action are generated.
%

Note that in Prop.~\ref{prop:correct}
the weak constraints $(10)$ and $(11)$ are excluded: as every answer set for them would represent an optimal way to handle given norms, they would eliminate every answer set containing obligations or prohibitions, and completeness would be lost.
By including them, preference will be given to unrestricted behaviour, in interplay with possible further rules. 


\subsection{Paradox encodings} 

We describe how to extend the common core to encode selected paradoxes from above. 

\mypar{Ross's Paradox.} 
In contrast with what happens in SDL, we do not want to derive (R2) in the ASP encoding.
To achieve this, we add  the following rule and facts to the core:
\begin{center}\boxed{\begin{aligned}
			&\wc -O(mail).\ [1:2]&(12) &\quad& 
			&act(mail).&(13)& \quad& 
			&act(burn).&(14)\\
	\end{aligned}}
 \end{center}
Note that a disjunction over obligations is represented by two different answer sets that each contain one possible way to satisfy the obligation over the disjunction.

The obligation $(R1)$ is created using the weak constraint $(12)$, while the facts $(13)$ and $(14)$ declare \textit{mail} and \textit{burn} as actions to reason about. For \textit{mail} (resp. \textit{burn}) the core encoding guesses it as obligatory or not; 
(12) may penalise the guess at the highest level if the program does not 
entail mailing the letter as obligatory. As 
constraint violation at the highest level is minimised,  each optimal answer set includes the obligation to mail the letter, should such an answer set exist.

The program has two answer sets; none of them derive the obligation to burn the letter, and they only differ for the choice of the agent to burn the letter or not. By adding a rule specifying that it is not possible to perform both acions: burn the letter and mail it, the answer set where the agent chooses to burn the letter would not be derived. 

\mypar{Plato's Dilemma.} 
Recall that the desired outcome of this dilemma would be that the agent takes her child to the hospital, thereby violating the obligation of meeting her friend for dinner. The encoding presents two interesting aspects: prioritisation of the obligations and the impossibility of taking both actions. This can be encoded as follows:
\begin{center}
	\boxed{\begin{aligned}
			&\wc -O(help), Happens(emergency). [1:3]&(20)& \qquad & act(help).&& (23)\\
			&\wc  -O(meet). [1:2]&(21)& &\ruleatom Do(help), Do(meet).&&(24)\\
			&act(meet).&(22)& &Happens(emergency).&&(25)
	\end{aligned}}\\\end{center}
  The weak constraint $(20)$ is at level $3$, the highest in this encoding, and penalises answer sets in which $Happens(emergency)$ is true but the obligation to help is not derived. In other words, it derives the obligation $(P2)$ to help the child in case of an emergency.
  The weak constraint $(21)$ encodes the obligation $(P1)$ to meet the friend for dinner, but at a lower level (viz.\ 2), which gives priority to $(P2)$.
%
		The constraint $(24)$ encodes the impossibility of taking both actions. 
  With the assertions that an emergency occurs and $meet$ and $help$ are the possible actions, 
  as desired, a single answer set exists containing the obligation to help the child.
  
\mypar{Fence Paradox} 
 One might think that CTD obligations could be handled like exceptions to obligations. While one could accordingly
 state ``There may be a fence if it is white'', it would not have the same meaning as in the paradox. Handling a CTD obligation as an exception results in a loss  the original obligation to a certain degree; 
 it could in this case be seen as {\em the least thing to do to set things right}. 
 While having a white fence improves the situation, the presence of the fence itself remains undesirable~\cite{CAP}.

	The important fact to consider is that should the cottage be by the sea, then as 
 (F1) is not active due to 
  (F3), the fence 
  need not be white.\label{reason} 
  To this end, we add the following to the common core:%
  \begin{center}%
			\boxed{\begin{aligned}
					&\wc  -F(have\_fence),\, not\, Location(sea).\ [1:2]&(30)&\\
					&\wc  Do(have\_fence),\, not\, Location(sea),& & &act(have\_fence).&&(32)\\[-2pt]
                    &\phantom{\wc }\ -O(have\_white\_fence).\ [1:2]&(31) & &act(have\_white\_fence).&&(33)
			\end{aligned}}\end{center}
	Here (30) caters for (F1) and (31) for (F2); notably,  
   (F3) also affects the CTD obligation (F2). This is needed as the fence has to be white only when the cottage is not by the sea. Otherwise the obligation for the fence to be white would also be derived if the cottage was by the sea.
		
  To check whether the obligation for the fence to be white is deduced when the cottage is by the sea (and we have a fence), 
  we further add:
  \begin{center}
			\boxed{\begin{aligned}
					&Location(sea).\qquad 
					&Do(have\_fence).
			\end{aligned}}\\\end{center}
		
Then two answer sets exist; both do not derive the obligation for the fence to be white. When testing other constellations, the answer sets obtained also represent the expected results.

 \section{Generalisation and Methodology}
		We can classify the obligations that appeared in the paradoxes
  into the following classes:
  \begin{itemize}
			\item \textbf{Regular obligations:} These obligations should be followed as long as possible (without violating a more important obligation).
			\item \textbf{Conditional obligations:} These are obligations that only need to be followed given certain preconditions. E.g., the obligation to wear a suit when at a formal event. 
			\item \textbf{Obligations over disjunctions:} obligations that are fulfilled by satisfying any disjunct that constitutes the obligation; 
             e.g., bring dessert or salad. 
			\item \textbf{Conjunctions of obligations that all need to be satisfied:} obligations consisting of multiple parts where satisfying all parts is necessary.
			\item \textbf{Obligations with exceptions:} 
            obligations to be followed unless an exception is given.
			\item \textbf{Contrary-to-duty obligations:} 
           obligations that arise as another obligation is violated. 
			\end{itemize}
		 Note that prohibitions are 
   viewed as regular {\em negative} obligations, i.e. the obligation not to do something. The different kinds of obligations are encoded as shown in Table~\ref{tab:encodings}.
	%
\begin{table}[t]
\renewcommand{\arraystretch}{1.1}
\centering		 	\begin{tabular}{ |c|l| } 
		 		\cline{1-2}
		 		Type of obligation & Encoding \\ 
		 		\cline{1-2}
		 		Regular &  $\wc -O(o).\ [w:l]$ \\ 
		 		Conditional & $\wc -O(o),condition.\ [w:l]$ \\ 
		 		Disjunction & $\wc -O(o_1),-O(o_2),\ldots,-O(o_n).\ [w:l]$\\
		 		Conjunction & $\wc  not\, Conj.\ [w:l]$  \\
		 		& $Conj:-O(o_1),\ldots, O(o_n).$\\
		 		Exceptions & $\wc -O(o),\phantom{o}not\, Exception.\ [1:2]$\\
		 		Contrary-to-duty & :$\sim-Do(o_1),-O(o_2).\ [1:2]$\\
		 		\cline{1-2}
		 	\end{tabular}
\caption{Encodings for different types of obligations}\label{tab:encodings}
\end{table}
%
Their encoding uses weak constraints
to model defeasibility. An obligation to take an action $a$ that should always hold is encoded in the following way:
\begin{align*}
			\wc  -O(a).\ [w:l]
		\end{align*}
Note that the weight $w$ and the level $l$ of the weak constraint depend on the importance of the obligation and conflicting obligations. In most cases, $w=1$ and merely $l$ is used to encode priorities among obligations. Conflicts between obligations are detected and the priority among obligations is established through weak constraints (more important obligations have higher level). 
		
		By generalising the encodings of the considered paradoxes (from Section~\ref{Paradoxes}), we propose the following encoding methodology 
       that consists of the following steps:
\begin{itemize} 
\item[] {\bf Step 1.} For each of the norms determine what kind of obligation it represents, among the six different kinds of obligations we have considered.
   
\item[] {\bf Step 2.} Determine which actions are simultaneously incompatible. 
			Knowing which actions are in conflict eases determining the importance of the obligations. Incompatibility needs to be determined through context. 
            E.g., while in general it is possible to watch a movie while browsing the internet, these actions may be incompatible 
            on an old 
            smartphone. 
            
\item[] {\bf Step 3.} Encode the different kinds of obligations and their importance. Here weights as priorities play an important role. 
			There are two cases we need to consider:
             
			\quad{\em Case 1}. 
             An obligation (of whatever kind) is more important than the other. In this case, setting the level of one constraint higher than 
             the other is sufficient. E.g., consider the case 
             of two obligations $o_1$ and $o_2$ where the latter is more important. If 
             $o_1$ and $o_2$ have no special properties, the encoding is: (with $j\geq 1$)
			\begin{align*}
\wc -O(o_1).\ [1:l] \qquad 
&\wc -O(o_2).\ [1:l+j]
			\end{align*}
			Note that $j$ may differ  given additional obligations.
			We account for this as follows.
   
			Suppose the incompatibilities between actions are given, as well as the importance of obligations by a preference $O'\succ O$ stating that the obligation $O'$ is strictly more important than $O$. We generate a directed graph $G=(V,E)$ whose vertices $V = \{O_1,\ldots,O_n\}$ are the obligations having the edges $E=\{(O_i,O_j) \in V^2\mid O_i\succ O_j\}$; note that $G$ must be acyclic.%
   \footnote{For Non-strict preference $\succeq$, we can use the supergraph of $G$, whose nodes cluster all equally preferable obligations.}
   %
   The sinks of $G$, i.e., vertices with no outgoing arcs, are assigned priority $p_1 = 2$. 
   After simultaneously removing all sinks, we iterate the process with increased priority, i.e., assign the new sinks 
   priorities $p_2=3$, $p_3=4$ etc.; this results in a priorization by levels.
			
			\quad{\em Case 2.} 
             Multiple obligations $o_1,\dots o_n$ are in conflict with an obligation $o$. While satisfying $o$ is better than satisfying a
             single $o_i$, satisfying multiple 
             $o_i$'s may be equally good or better than satisfying $o$. In this case, we can use the weights of the weak constraints for $o_1,\ldots, o_n$ to encode this: they must then correspond to their importance and $o$ must have a weight that is equal or smaller than the joint weights. 
			
			For example, if $o_1, o_2$ and $ o_3$ are mutually non-exclusive and equally important to $o$ if all are satisfied, the following weights could be chosen, for a number $k\geq 1$ (the level 
            is the same):\begin{align*}
&\wc -O(o_1).\ [k:l]\ &&\wc -O(o_2).\ [k:l]\
&\wc -O(o_3).\ [k:l]\ 
&&\wc -O(o).\ [3k:l]
			\end{align*}
\item[] {\bf Step 4.} Encode the exclusion of combinations of actions found incompatible. If two actions $a_1$ and $a_2$ are incompatible, this is encoded by adding the following constraint:
				\begin{align*}
						&\ruleatom Do(a_1), Do(a_2).
			\end{align*}
\item[] {\bf Step 5.} Encode additional information. This includes denoting constants as actions using the predicate $act$, and specifying dependencies between actions; 
%
e.g., if the action running entails the action moving, we add a rule
   \begin{align*}
Do(move)\ruleatom  Do(run).
			\end{align*}
\end{itemize}  
\begin{observation}\label{obs:RNDKD}
The common core simulates the SDL axioms $RND$ and $KD$.
\end{observation}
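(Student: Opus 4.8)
The plan is to follow the template of the proof of Proposition~\ref{prop:DD}: rather than faithfully embedding full SDL, I would identify, for each of $RND$ and $KD$, the rules of the common core that reflect the inference the axiom licenses at the level of the atomic deontic predicates $O(\cdot)$, $F(\cdot)$, $Do(\cdot)$ and $Dia(\cdot)$. Throughout I read ``$O\varphi$ is a theorem'' as ``$O(X)$ is consistently derivable (and, where required, forced to be done)'' in the answer sets of the core, just as Proposition~\ref{prop:DD} read $DD$ as the across-the-board validity of rule $(5)$.

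For $RND$ I would argue as follows. Necessitation turns a theorem into an obligation, and its only deontic commitment is ``ought implies can'': one should never be committed to deriving $O$ of something impossible. This is exactly what rules $(3)$ and $(4)$ secure. By rule $(1)$, $O(X)$ may be guessed for every $act(X)$, and rule $(3)$ together with $(4)$ forbids $O(X)$ precisely when $X$ is not possible (i.e.\ $-Dia(X)$, which by $(4)$ follows from $-Do(X)$). Hence for any action that is always available---the encoding's counterpart of a theorem---the guess $O(X)$ survives all constraints and additionally forces $Do(X)$ through the same two rules. I would spell out this ``obligatory implies done'' consequence and observe that it is the behaviour $RND$ prescribes once obligations are read as atomic.

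For $KD$ I would exploit that, in the core, an obligation propagates to the performance of the obliged action and that performed actions are closed under the encoded dependencies (rule $(8)$ and the dependency rules $Do(\psi)\ruleatom Do(\varphi)$ of Step~5). Thus whenever the implication $\varphi\rightarrow\psi$ is represented as such a dependency and $O(\varphi)$ holds, $(3)$ and $(4)$ force $Do(\varphi)$, the dependency yields $Do(\psi)$, and $O(\psi)$ may then be consistently carried; this reproduces, at the atomic level, the distribution of obligation over modus ponens that $KD$ expresses.

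The main obstacle is to state a \emph{sense} of simulation that is both non-vacuous and non-contradictory, since $RND$ and $KD$ classically entail $RMD$, and $RMD$ is exactly the axiom the core must reject to avoid Ross's Paradox. I would resolve this by stressing that $O$ is applied only to atoms: there is no rule closing $O$ under arbitrary logical implication, so the classical derivation of $RMD$ from $RND$ and $KD$---which needs $O$ of the compound formula $\varphi\rightarrow\psi$ and then $O$-modus-ponens over it---cannot be carried out in the core. Pinning down this atom-level restriction, and checking that each claimed instance of $RND$ and $KD$ is reflected by the cited rules while the $RMD$ instance is not, is the delicate part of the argument.
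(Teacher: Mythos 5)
Your argument for $RND$ does not go through once the weak constraints, which are part of the common core, are taken into account. You read a theorem as an always-available action and claim the guess $O(X)$ ``survives all constraints''; it survives the hard rules $(1)$--$(9)$, but the weak constraint $(10)$, i.e.\ $\wc O(X).\ [1:1]$, makes any answer set containing $O(X)$ strictly worse than its counterpart without it, so no \emph{optimal} answer set contains $O(X)$ for such an action, and then nothing forces $Do(X)$ either. The paper's reading of ``theorem'' is different, and that difference is what makes the claim tenable: a theorem is something that \emph{cannot be violated}, i.e.\ $Do(\varphi)$ holds no matter what. The paper then explicitly concedes that $O(\varphi)$ is \emph{not} derived and locates the simulation of $RND$ at the behavioural level --- the agent does $\varphi$ in every answer set, which is all the obligation could demand. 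Claiming, as you do, that the obligation itself is carried contradicts the way $(10)$ and $(11)$ prune optimal answer sets.

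Your $KD$ step has a parallel gap. You encode the bare implication $\varphi\rightarrow\psi$ as a dependency rule $Do(\psi)\ruleatom Do(\varphi)$ and conclude that $O(\psi)$ ``may then be consistently carried''. But $KD$ requires $O(\psi)$ to be \emph{derived} whenever $O(\varphi)$ is, and mere consistency is not enough: by $(10)$, an optimal answer set will keep $Do(\psi)$ but drop $O(\psi)$, since adding the obligation only incurs extra weak-constraint weight. The paper instead encodes $O(\varphi\rightarrow\psi)$ itself --- not the implication --- as a weak constraint (at a level above $1$) sanctioning answer sets that contain $Do(\varphi)$ but not $O(\psi)$. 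Since $O(\varphi)$ forces $Do(\varphi)$ via $(3)$ and $(4)$, every answer set containing $O(\varphi)$ but not $O(\psi)$ is sanctioned, so optimal answer sets containing $O(\varphi)$ must contain $O(\psi)$, unless that conflicts with an obligation of higher or equal importance; this is the precise sense in which the core simulates $KD$. Your closing observation --- that $RMD$ is blocked because $O$ applies only to atoms and is not closed under implication --- is sound and consistent with the paper, but it does not repair the two steps above.
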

As there are no theorems in our framework, $RND$ is not directly implemented. However, if we read a theorem as something than cannot be violated, $RND$ becomes: ``If it is impossible to violate $\varphi$, then $\varphi$ is obligatory.'' In our semantics this obligation would not be derived; however the obligation to take the action $\varphi$ is given indirectly as the agent must do $\varphi$ (that cannot be violated). For $KD$, recall that if an obligation is in an answer set, the agent has to take the corresponding action. We encode  $O(\varphi\rightarrow\psi)$ using a weak constraint that sanctions answer sets including $Do(\varphi)$ but not $O(\psi)$. As every answer set that contains $O(\varphi)$ also contains $Do(\varphi)$, each answer set that contains $O(\varphi)$ but not $O(\psi)$ is also sanctioned. This way we simulate $KD$, as an answer set that contains $O(\varphi)$ must also contain $O(\psi)$, unless the latter conflicts with an obligation of higher or equal importance.
\section{A Case Study: Ethical Pac-man}\label{sec:use-cases}
We put the methodology described in the previous section to work 
on a reinforcement learning agent playing variants of
the game Pac-man. 
Pac-man features a closed environment with simple game mechanics and parameters which are easy to manipulate, and extend with norms that can simulate normative conflicts. 

\begin{wrapfigure}{R}{0.425\textwidth}
\vspace*{-0.5\baselineskip}
 \caption{Pac-man}
	\centering
	\includegraphics[width=0.4\textwidth]{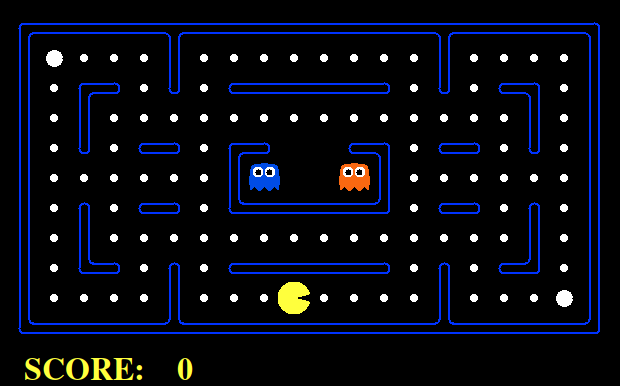}
	\label{fig:pacman}
\end{wrapfigure}
The starting position of the game is depicted in Fig.~\ref{fig:pacman}. Pac-man's objective is to eat all the pellets in the maze while avoiding two ghosts (orange and blue) that will kill him upon contact. Pac-man and the ghosts typically move one step at a time, with the ghosts' movements being non-deterministic. If Pac-man ate one of the larger pellets, the ghosts enter a scared state and become vulnerable, allowing Pac-man to eat them. In this state, the ghosts move at half speed. A scared ghost is instantly eaten by Pac-man when their distance is less than 1 on both axes. Points are awarded for consuming pellets and ghosts, with a discount applied based on the duration of the game (the longer the game lasts, the lower the score). Pac-man wins if he collects all the pellets, and a faster completion time results in a higher score. 
%

Following~\cite{N2019,CADE}, we consider variants of the Pac-man game with additional ``ethical norms". 
  {\em Vegan Pac-man}, in which Pac-man
is not allowed to eat any ghost has been introduced in~\cite{N2019} and implemented there using multi-objective Reinforcement Learning (RL) with policy orchestration. 
Vegan Pac-man and its vegetarian variant were analyzed in~\cite{CADE}, that will serve as the benchmark for our work.
  {\em Vegetarian Pac-man} can eat the orange ghost (as it
                would be cheese) but not the blue ghost. 
    %
    %
%
The approach used in~\cite{CADE} combines RL with formal tools for normative reasoning.
The authors implement a logic-based
\textit{normative supervisor} module, which informs the trained RL agent of the ethical requirements in force in a given situation. At each step, Pac-man  chooses an action complying with the norms, and a least evil action 
if there is no such action.
Their approach allows to deal with
complex normative systems, conflicting obligations, and situations where
no compliance is possible.  Norms and the current state of the agent's
environment are encoded in defeasible deontic logic~\cite{DDL1,DDL2},
 which is a deontic logic with 
defeasible rules that specify typical correlations,
such as  ``birds usually fly"; its theorem prover SPINdle is used to check norms compliance.  Exceptions are encoded by
so-called defeaters, e.g., if the bird is a penguin.
		
%
%
		
 We consider here an alternative realization of the normative
 supervisor, based on our norms encoding and using the DLV reasoner. We experimentally compare the obtained results and apply our methodology to more intricated ``ethical norms" for Pac-man.
		
\begin{description}
\item[Vegan Pac-man:]
To prohibit Pac-man from eating any ghost, we can state:
\begin{align*}
			O(\neg eat(g)) \qquad \text{ respectively
                        } \qquad F(eat(g)), \quad  \text{for } g \in \{ blue\_ghost, blue\_ghost \}.
		\end{align*}
\item[Vegetarian Pac-man:] To prohibit Pac-man from eating the blue ghost, we state:
		\begin{align*}
			O(\neg eat(blue\_ghost)) \qquad \text{
                          respectively } \qquad F(eat(blue\_ghost)).
		\end{align*}
\end{description}

We encode the norm bases by forbidding Pac-man to move in a direction if the ghosts are scared and 
moving there could lead to eat a ghost. Furthermore, we forbid Pac-man from stopping if  a ghost could move into Pac-man (and then be eaten).
%
Pac-man may still eat a ghost. This can happen if both a ghost and Pac-man move towards a larger pellet from perpendicular directions. In this case, Pac-man will first eat the pellet and then 
the ghost. 
Furthermore, Pac-man could be cornered between two scared ghosts, leaving
him no choice but eating one of them. 
	
The scenarios that can precede Pac-man eating a ghost are the same for both norm bases. As Pac-man and the ghosts can move at most one step at a time, we can derive that the Manhattan distance between Pac-man and a scared ghost must in this case be
at least 1 and at most 2 (coordinates are integers). This leaves three possibilities for their relative locations. We encoded the norms by
accounting for the locations of the ghosts relative to Pac-man and 
forbidding Pac-man to make moves that could lead to
eating a ghost.

\mypar{Experimental Results.}
The vegan norm base was implemented in~\cite{N2019}  who trained two different models for Pac-man;
in one model he was trained to maximize the game score, and in the other to comply with the norms using respective data. 
 An external function enabled the agent to decide which model to use for choosing the next move. When the importance of the norm-compliance model was low, the agent in general did not comply with the norm, resulting in around $2$ ghosts eaten per game. 
Making the importance sufficiently high, the number of ghosts eaten did decrease to one (similar to our results, see Table~\ref{tab:results}). In this approach, however, it is not clear how to enforce more complex norm sets, including, e.g., contrary-to-duty obligations.

In order to compare Neufeld et al.'s approach~\cite{CADE} with our ASP encodings, we utilized their Java-based framework. This framework is built in Java and provides the normative reasoners with a ready-to-use interface to the Pac-man game.%
\footnote{\label{fn:code}For complete DLV code and experimental data, see {\scriptsize\url{github.com/Chrisi-boop/DLV-Normative-Reasoning-}}.}
 The  RL agent was trained on $250$
                games and the normative supervisor was evaluated
                on $1000$ games with the same initial state.

\begin{table}[t]
\centering
\begin{tabular}[h]{@{}l*{4}{|@{\!}c@{}c@{}c}@{}}
\renewcommand{\arraystretch}{1.1}
	{norm base} &\multicolumn{3}{c|@{\!}}{%
     \hspace*{-2pt}\% games won} &\multicolumn{3}{c|@{\!}}{\hspace*{-2pt}game score avg[max]} &\multicolumn{3}{c|@{\!}}{\hspace*{-2pt}avg ghosts eaten (blue/orange)} &\multicolumn{3}{c@{}}{\hspace*{-2pt}avg time (s)} \\[2pt]
 \cline{1-13}
Vegan & ~$90.7$~ & |& $91.2$ & $~1209.86[1708]~$ &|& $~1217[1538]$ & ~~$0.023/0.02$~ &|& $0.013/0.018$ & ~10.1~ &|& 6.7\\
Vegetarian & $94.0$ &|& $90.6$ & $~1413.80[1742]~$ &|& $~1366[1751]$ & $\phantom{0}0.01/0.79$ &|& $0.001/0.788$ &  9.8 &|& 6.5\\
Weak Vegan &  &|& $89.9$ &  &|& $~1204[1731]$ &  &|& $0.002/0.043$ &  &|& 6.7
\end{tabular}

\caption{Results for Neufeld et al.'s normative supervisor~\cite{CADE} ~|~  our encoding}\label{tab:results}
\end{table}
The results reported in~\cite{CADE} for their normative supervisor (using
SPINdle)
and for our ASP-based normative supervisor
(using JDLV, the java framework for DLV)
are shown in Table~\ref{tab:results}.
The latter outperformed the original supervisor for both norm bases
w.r.t.\ the ghosts eaten and 
speed: its average running time 
was less than 7~seconds, while
for Neufeld et al.'s normative supervisor it was some seconds 
(roughly 50\%) longer,
 using a Lenovo Y50-70 with 8GB RAM and Ubuntu 22.04 LTS.
For the vegan norm base, our encoding resulted in a higher winning rate and an improved average score. (A game loss costs $500$ points, resulting possibly in a negative score.)
The higher maximum score 
in Neufeld et al.'s results 
 is likely attributed to a game where both ghosts were eaten.
 In the vegetarian norm base, the lower winning rate and average score is probably due to Pac-man's preference to lose in our framework a game rather than eating a ghost.


\mypar{Weak Vegan Pac-man.} Consider the following more intricated 
variant of the vegan norm base:
\begin{enumerate}
                	\item[O1] It is obligatory not to eat the blue ghost.
                	\item[O2] It is obligatory not to eat the orange ghost, unless a ghost has already been eaten. 
                	\item[O3] It is obligatory to stop (for one move) after having eaten a ghost. 
                 \end{enumerate}
Thus, the behaviour may change at some point. 
To encode it, we follow our methodology:

\noindent\textbf{Step 1:} We categorise the obligations.
\begin{myitemize}
        \item O1 is a regular obligation. It 
        has no exceptions and 
        should be followed at all times.
        \item O2 is an obligation with an exception. Once the latter occurs, it holds until the game is over.
        \item O3 is a derived obligation. Note that it is not necessarily a CTD obligation, as eating a ghost may not violate an obligation if the ghost is blue.
    \end{myitemize}
\noindent\textbf{Step 2:} We look at pairs of obligations that cannot be fulfilled simultaneously.
\begin{myitemize}
        \item O1 and O2 could possibly be in conflict, as Pac-man may be stuck between two ghosts moving towards him. This may force him to eat one of them. In this case we prioritize O1, as we want preventing the blue ghost from being eaten to have the highest priority.
        \item O1 and O3 could be in conflict, as stopping after eating an orange ghost 
        may lead to eating a blue ghost. 
        \item O2 and O3 cannot be in conflict, as after eating a ghost the exception to O2 is empowered. 
        We give priority to O1 over O3 since the preservation of the blue ghost from being consumed  has highest priority. 
    \end{myitemize}
    Summarizing the statements above, we obtain the following preferences:
%
                	O1  $\succ$  O2, 
                	O1  $\succ$  O3.
                
\noindent\textbf{Step 3:} 
 We derive the following weights and levels for the weak constraints
                corresponding to the obligations: O1 is assigned [1 : 3],  O2 is assigned [1 : 2], and O3 is assigned [1 : 2].

 We next look at the predicates used in the encoding:%
 \begin{itemize}
            \item \textit{Pacman(X,Y)} 
            denotes the location of Pac-man. $X$ and $Y$ refer to his coordinates on the map.
            \item \textit{BlueGhost(X,Y,B)} and \textit{OrangeGhost(X,Y,B)} 
            denotes the location of the blue resp.\ orange ghost, where $X,Y$ are
            its coordinates and $B$ 
            is $1$ if the ghost is scared and $0$ otherwise. 
        
            \item \textit{Exception} means the agent has already eaten a ghost and 
            may eat orange ghosts. 
            Whenever {\it Exception} is in an answer set, the normative reasoner will inject it into any later answer set.
        \end{itemize}
        The actions to reason about are \textit{stop}, \textit{north}, \textit{east}, \textit{south}, \textit{west}, \textit{eat(blue\_ghost)}, \textit{eat(orange\_ghost)}:\begin{itemize}
            \item \textit{Do(stop)} means that the agent remains stationary for one action.
            \item \textit{Do(d)}, $d \in \{ north, east, south, west\}$, means that the agent will move in direction $d$.
            \item \textit{Do(eat(g)}, $g \in \{ blue\_ghost, orange\_ghost\}$ means that the agent eats 
            that ghost.
          \end{itemize}
\begin{figure}[tb]
\caption{Encoding for the Pac-man norm base}\label{fig:Pacmannorm}
                	\begin{subfigure}{0.3825\textwidth}
                		\centering
                		$\begin{aligned}
                				&\wc  -F(eat(blue\_ghost)).\ [1:3]\\
                				&\wc  not\phantom{o}Exception, \\[-2pt]
                                & \phantom{\wc \;\; } -F(eat(orange\_ghost)).\ [1:2]\\
                				&\wc   -O(stop), \\[-2pt]
                                & \phantom{\wc \;\; } Do(eat(orange\_ghost)).\ [1:2]\\
                                &\wc  -O(stop), \\[-2pt]
                                & \phantom{\wc \;\; } Do(eat(blue\_ghost)).\ [1:2]\\[4pt]
                                ~\\
                		\end{aligned}$
                		\caption{obligations \\ ~}\label{fig:Pacman-a}
                	\end{subfigure}
               	\begin{subfigure}{0.33\textwidth}
                		\centering
                		$\begin{aligned}
                				&\ruleatom Do(stop),\phantom{o}Do(north).\\[-2pt]
                                &\qquad\cdots\\[-2pt]
                                &\ruleatom F(eat(blue\_ghost)),\\
                                &\phantom{\ruleatom }\, {-}F(east), Pacman(A,B),\\ &\phantom{\ruleatom }\; BlueGhost(C,D,1),\\
                               &\phantom{\ruleatom }\; E=C-A,\, E\leq 2,\\ 
                                &\phantom{\ruleatom }\; G=B-D,\, G\leq1.\\[-2pt]
                                &\qquad \cdots\\
                                ~\\[-0.7\baselineskip]
                		\end{aligned}$                        
                		\caption{conflicting actions \\ ~ }\label{fig:Pacman-b}
                	\end{subfigure}
              	\begin{subfigure}{0.275\textwidth}
                		\centering
                		$\begin{aligned}
                				&act(d). \text{ for }d \in D\\
                				&act(eat(blue\_ghost)). \\
                			      &act(eat(orange\_ghost)).\\
                                &\ruleatom \; \textstyle\bigwedge_{d\in D} F(d).\\
                                ~ \\
                                ~ \\
                                ~ \\  
                                ~\\[-0.1\baselineskip]
                		\end{aligned}$
                		\caption{action 
                      specs; $D = \{ stop,$ $north,$ $east,$ $south,$ $west \}$}\label{fig:Pacman-c}
                	\end{subfigure}
\end{figure}                
To encode the norms we add the rules and facts in 
Fig.~\ref{fig:Pacmannorm} to the common core.
                Following our methodology, we encode the obligations as in Fig.~\ref{fig:Pacman-a}. 
                 Note that the obligation to eat the blue and orange ghost are encoded as separate actions as two different obligations (O3).

\noindent\textbf{Step 4:} 
The conflicting actions from Step~2 are encoded in Fig.~\ref{fig:Pacman-b}.
For Pac-man, we encode that he cannot take two actions at once, e.g., stop and move north.
For ghosts, as a scared ghost may move both towards and away from Pac-man, we cannot encode ensuing conflicts directly. 
However, we can assert that permitting an action that could potentially
involve consuming a ghost is inconsistent with the prohibition of
consuming that ghost.
Fig.~\ref{fig:Pacmannorm}.b shows one such rule; for space reasons we omit the other rules, as well as further action constraints for Pac-man.$^{\ref{fn:code}}$
                	
\noindent\textbf{Step 5:} 
We state the actions to reason about (Fig.~\ref{fig:Pacman-c}),
and encode that it is not possible to prevent Pac-man from taking any action (i.e., Pac-man must either stop or move).

 In fact, our actual encoding omits the actions $eat(blue\_ghost)$ and $eat(orange\_ghost)$; they are substituted by prohibitions on moving towards scared ghosts. We used them here for readability.            

\mypar{Experiments.}
The implementation of this norm base yielded results that align with our expectations, as depicted in Table~\ref{tab:results}. Pac-man consumed more ghosts compared to the vegan norm base but fewer than the vegetarian norm base, and mostly orange ghosts. The obligation to halt when Pac-man eats a ghost resulted in a lower average score, as time passage leads to point deductions. This observation also explains the marginal increase in the maximum score and the slight decrease in the winning rate. Despite the added complexity of the norm base, there was no significant difference in the running time.


\section{Related Work and Conclusion}

Starting with an analysis of well-known deontic paradoxes, we have introduced a methodology to encode normative systems in ASP, using DLV as the system of choice. Our approach determines optimal ways to handle scenarios, using agreed upon prioritizations of obligations.
  
Multiple approaches to implement normative systems do exist. Some of those related to the multi-agent systems community can be found, e.g., in~\cite{alechina}. We will discuss below the approaches most similar to ours.
		
One of the earliest works on encoding normative systems is Sergot et al.~\cite{Nation}, who did encode the British Nationality Act in logic programming. 
However, their work focused on determining the applicability of the British Nationality Act to specific individuals, without delving into the reasoning about obligations or seeking optimal courses of action within the framework of given norms.

\cite{Impact} introduced in IMPACT syntax and semantics for reasoning about obligations and prohibitions among agents in a rule-based language under different semantics, among them stable model semantics. Although they refer to deontic logic, the proposed way of dealing with conflicting obligations is to satisfy a maximal subset of obligations, without considering possible preferences among them. 
		
       An abductive logic programming framework has been employed in~\cite{Goal} to encode 
       obligations in various deontic paradoxes. Rather than trying to derive all
                optimal ways of fulfilling given obligations, the authors
                focused on finding a best model of a definite Horn logic program that satisfies given goals.
                When it comes to establishing model preferences, the auxiliary symbols used in the encodings played a significant role.
                However, their usage requires care, and no systematic way of defining model preference was considered, for which our work could provide inspiration. 
                
		Using a combination of input-output logic and game
                theoretic methods, ~\cite{regcon} 
                encoded
                the behaviour of (multi-)agents 
                subjected to a normative system. In their work,
                agents possess the ability to reason in a manner more akin to humans, for example, determining if the consequences of violating an obligation are worth the penalty. 
                Their work lacks however computational support.

		Temporal Logic has been used to simulate normative reasoning. 
  Temporal Logic approaches offer an advantage in enforcing norms that indirectly restrict certain actions~\cite{LTL1,LTL2}. Moreover there are advanced tools that effectively combine reinforcement learning, with, e.g., LTL (see \cite{ABEKNT2018}), and LDLf (see \cite{GiacomoV15}). 
However, as is well known temporal logic cannot handle all the intricacies of normative reasoning, see, e.g., the discussion in~\cite{alechina,NBC2022}.

		Different ASP approaches have been proposed in~\cite{DLP,AlferesGL13,tempdeon}. In these works, logic programs were extended with the SDL modality, and~\cite{tempdeon} introduces also temporal operators. Alferes et al.~\cite{AlferesGL13} further provides a way to check equivalence of two deontic logic programs. These approaches however
   require an understanding of the embedded logic and some implementation efforts, whereas our method can be used out of the box and with an ASP solver off the shelf. 
		

                                

In our approach, the selection of paradoxes has proven to be of utmost importance,  as failure to include certain ones resulted in overlooking crucial aspects of normative systems; 
e.g., the analysis of the \textit{Fence Paradox} 
has enabled us to differentiate between contrary-to-duty obligations and exceptions, which is a well-known problem in the field of normative reasoning~\cite{CAP}.
%
An important feature of our approach is the availability of optimized
tools (ASP solvers) and the simplicity of the  encoding. There is a
clear cut common core that is supplemented with defined ways for encoding different kinds of obligations.
The approach, described for DLV in this work, can be easily
transferred to other ASP solvers, e.g.\ clasp ({\small\url{potassco.org/clasp}}).
%
A main limitation of our approach is that
encoding intricate normative systems can result in large programs.
		Furthermore, obligations that require the agent to maintain certain conditions might be cumbersome to encode for complex scenarios where taking
                particular actions might indirectly lead to a violation.  
                Consider for instance
                ``See to it that the child stays dry.''
                Here it is not enough to simply not get the child wet, but one must also take measures to protect the child from getting wet through other means. This may entail preventing the child from going outside if it is about to rain.
                For such normative systems, 
                approaches extending ASP with temporal logics
                \cite{tempdeon}, 
                may be preferable. 

\mypar{Outlook.} 
A peculiar feature of our approach is that all encoded obligations are comparable based on
their associated weights and priorities. 
While in our encoding we did not run into problems,
 it is possible that certain normative systems may require optimal answer sets encoding solutions that are inherently incomparable.
 Future work may also look into other ASP  solvers with more sophisticated means for filtering out answer sets to model normative reasoning. The clasp extension asprin 
({\small\url{potassco.org/asprin}})
or the DLV$2$ system using  WASP \cite{WASP}, could serve as examples.
		
The Pac-man case study highlighted the fact that even simple obligations can require a substantial number of weak constraints to accurately encode the means of fulfilling those obligations. To address this weakness, we plan to explore the integration of our framework with auxiliary software capable of reasoning about the actions neeeded to fulfill specific obligations. In the context of Pac-man, 
such software might have to interpret, e.g., how the obligation to not eat a ghost could be fulfilled.

\bibliography{Quellen}
\bibliographystyle{eptcs}


\ifappendix
\newpage
\appendix

\section{Proofs}
\todo{I have removed the first proposition proof due to it being included now}
\begin{appdxProp}{prop:correct}
\propCorrect
\end{appdxProp}
\begin{proof}
The idea behind the common core is to create all possible consistent ways of combining $O(action),F(action),Happens(action),Do(action),Dia(action)$. Note that $Dia(action)$ cannot appear as a positive predicate. Furthermore, $act(action)$ must appear in every answer set. We want the following combinations of predicates to appear:\begin{enumerate}
				\item One of the answer sets should deduce $O(action)$, the obligation to take the action. The action should not be forbidden. Consequently, $-F(action)$ should be in the answer set. In this case we also want the agent to take the action, therefore $Do(action)$ should appear in the answer set as well. Therefore, $Happens(action)$ should also be in the answer set. We therefore want one answer set to take the following form:\begin{align*}
					\{act(action), O(action), -F(action), Do(action), Happens(action)\}
				\end{align*}
				\item One of the answer sets should deduce $F(action)$, forbidding the action to be taken. The action should not be obligated. Consequently, $-O(action)$ should be in the answer set. In this case we also do not want the agent to take the action, therefore $-Do(action)$ should appear in the answer set as well. Therefore, $-Dia(action)$ should be in the answer set too (as $-Dia(action)$ can also mean that the agent does not choose to take the action). We therefore want one answer set to take the following form:\begin{align*}
					\{act(action), F(action), -O(action), -Do(action), -Dia(action)\}
				\end{align*}
				\item We want two answer sets where the action is neither obligatory nor forbidden. In one the agent chooses to take the action, and in one the agent does not. Using similar reasoning as in the cases above we want the two answer sets to take the following form:\begin{align*}
					&\{act(action), -F(action), -O(action), -Do(action), -Dia(action)\}\\
					&\{act(action), -F(action), -O(action), Do(action), Happens(action)\}
				\end{align*} 
				\end{enumerate}
				The common core would create the following answer sets:\label{answer_sets}\begin{enumerate}
					\item\label{aset1} An answer set where the action is guessed to be obligatory due to rule $(1)$. Rule $(2)$ can then only guess $-F(action)$, as rule $(5)$ forbids $O(action)$ and $F(action)$ from being in the same answer set. Rule $(6)$ can then only guess $Do(action)$, as $-Do(action)$ leads to $-Dia(action)$ due to rule $(4)$. However, $-Dia(action)$ cannot appear in the answer set because of rule $(3)$. Due to rule $(8)$, $Happens(action)$ is in the answer set as well. This leads to the following answer set:\begin{align*}
						\{act(action), O(action), -F(action), Do(action), Happens(action)\}
					\end{align*}
					\item\label{aset2} An answer set where the action is guessed to be forbidden due to rule $(2)$. Rule $(1)$ can then only guess $-O(action)$, as rule $(5)$ forbids $O(action)$ and $F(action)$ from being in the same answer set. Rule $(6)$ can then only guess $-Do(action)$, as rule $(7)$ would forbid it. Due to rule $(4)$, $-Dia(X)$ is in the answer set as well. This leads to the following answer set:\begin{align*}
						\{act(action), F(action), -O(action), -Do(action), -Dia(action)\}
					\end{align*}
					\item An answer sets where in rule $(1)$ and $(2)$ the action is guessed to neither be obligatory nor forbidden and $-Do(action)$ is chosen in rule $(6)$. Due to rule $(4)$ $-Dia(action)$ is in the answer set as well. This leads to the following answer set:\begin{align*}
						\{act(action), -F(action), -O(action), -Do(action), -Dia(action)\}
					\end{align*}
					\item An answer sets where in rule $(1)$ and $(2)$ the action is guessed to neither be obligatory nor forbidden and $Do(action)$ is chosen in rule $(6)$. Due to rule $(8)$ $Happens(action)$ is in the answer set as well. This leads to the following answer set:\begin{align*}
					\{act(action), -F(action), -O(action), Do(action), Happens(action)\}
					\end{align*}
				\end{enumerate}
				
				As those answer sets represent all desired possible cases, completeness is given. Note that these are the answer sets returned when using an ASP solver to solve the common core, without the weak constraints but with the added fact $act(action).$\\
				
				Let us now consider combinations of predicates for an action that we want to exclude as they are inconsistent:\\\begin{itemize}
					\item $\{-Dia(action), O(action)\}$ is not allowed, as we do not want an action to not be possible to take and obligatory at the same time. Due to rule $(3)$ this combination of predicates cannot appear as a subset of an answer set. 
					
					\item $\{O(action),-Do(action)\}$ is not allowed, as we do not want the agent to not take an action that has been deemed obligatory under given circumstances. Such a combination cannot appear as a subset of an answer set as $-Do(action)$ would lead to $-Dia(action)$ for an action due to rule $(4)$. As seen above $-Dia(action)$ and $O(action)$ cannot appear in the same answer set and therefore this combination of predicates cannot appear as a subset of an answer set.
					
					\item $\{O(action),F(action)\}$ is not allowed, as we do not want an action to be both obligatory and forbidden. Due to rule $(5)$ this combination of predicates cannot appear as a subset of an answer set.
					
					\item $\{F(action),Do(action)\}$ is not allowed as we do not want an agent to take an action which has been deemed forbidden under given circumstances. Due to rule~$(7)$ this combination of predicates cannot appear as a subset of an answer set.
					
					\item $\{-Dia(action),Do(action)\}$ is not allowed as we do not want an agent to take an action which has been deemed impossible under current circumstances. Due to rule $(9)$ this combination of predicates cannot appear as a subset of an answer set.
				\end{itemize}
				
				The answer sets $1$--$4$ all fulfill the above conditions. Therefore, our methodology fulfills soundness as well.\\

				Adding the weak constraints removes the answer sets \ref{aset1} and \ref{aset2} as the obligation respectively the prohibition increase the weight of the violated weak constraints $(10)$ and $(11)$. The goal is to filter out obligations resp. prohibitions which are not created through weak constraints at higher levels. This is done in order to ensure that no baseless obligations and prohibitions can be found in any answer set.
\end{proof}

\section{Encoding Different Obligations}
\label{sec:appendix}
We start by presenting how to encode different kinds of obligations.
\subsection{Conditional obligations}

These obligations arise due to a condition being met. This condition could for example be an event taking place. Such obligations can be seen in one version of \textit{Broome's Counterexample}. Assuming that the event that leads to the obligation is outside the control of the agent it can be formulated in the following way:\begin{align*}
	&\wc  Happens(event),-O(obl).\ [1:2]
\end{align*}
The event being outside of the agent's control is encoded through the predicate $Happens$. Events that are in the agent's control (such as the agent taking an action) would be encoded using the predicate $Do$ (as an example $Do(event)$).
Note that $event$ and $obl$ are placeholders denoting the event and the obligation, respectively.

\subsection{Obligations over disjunctions}

Obligations over disjunctions are obligations which are satisfied by satisfying any of the actions in a disjunction. The \textit{alternate service paradox} showcases such an obligation. Assume $o_1,\ldots,o_n$ are the actions in the disjunction. This can be simply encoded in the following way:\begin{align*}
	&\wc  -O(o_1),-O(o_2),\ldots,-O(o_n).\ [1:2]
\end{align*}

\subsection{Conjunctions of obligations that all must be satisfied}\label{conj}

These obligations are only satisfied when all parts of the obligation are satisfied. For this type of obligations satisfying only part of a conjunction is not preferable to satisfying none. An example for such an obligation can be seen in the rephrasing of \textit{Broome's Counterexample}. Two possible ways of encoding these obligations were presented. The shorter way involves an auxiliary predicate. Assume the latter is $Conj$ and $o_1,\ldots,o_n$ are the actions in the conjunction. The encoding could then take the following form:\begin{align*}
	&\wc  not\phantom{o}Conj.\ [1:2]\\
	&Conj\ruleatom O(o_1),\ldots,O(o_n).
\end{align*} 
Here $Conj$ is the auxiliary predicate. Note that if satisfying parts of the conjunction can be seen as preferable over not satisfying any part, the individual obligations are encoded separately.

\subsection{Obligations with exceptions}

Often obligations do not hold in certain circumstances. A common example of such an obligation that is often encountered is an exception to a no-parking zone during certain times. The \textit{Asparagus Paradox} shows an exception to an obligation under social norms. Exceptions can be modeled using auxiliary predicates. Using such an auxiliary predicate, called \textit{Exception} in the next example, such an obligation can be encoded in the following way:\begin{align*}
	&\wc  -O(obl),\phantom{o}not\phantom{o}\textit{Exception}.\ [1:2]
\end{align*}
\subsection{Contrary-to-Duty obligations}

Contrary-to-duty obligations arise due to another obligation not being fulfilled. An example of such obligations can be seen, e.g., in \textit{Chisholm's Contrary-to-Duty Paradox}. Note that these obligations can be considered as a special case of conditional obligations. In the encodings, they are handled as follows:\begin{align*}
	&\wc  -O(o_1).\ [1:2]\\
	&\wc  -Do(o_1),-O(o_2).\ [1:2]
\end{align*}
Note that $o_1$ and $o_2$ refer to obligatory actions. The second weak constraint is only of relevance, should the agent not take the obligatory action (or choose not to). \\
If the violated obligation has an exception, the latter must be encoded as part of the contrary-to-duty obligation. This could take the following form, where $e$ is an auxiliary predicate that is active when the exception is given:\begin{align*}
	&\wc  -O(o_1),\phantom{o}not\phantom{o}e.\ [1:2]\\
	&\wc  -Do(o_1),-O(o_2),\phantom{o}not\phantom{o}e.\ [1:2]
\end{align*}

\subsection{Conflicting obligations and prioritization}

It is often the case that we are subject to various obligations that are not satisfiable at the same time. There are multiple ways of handling such situations. More important obligations are either weighted more heavily or generated at a higher level. Suppose $O(o_1)$ to be the more important obligation and $O(o_2)$ the less important one. This can be formulated in the following way:\begin{align*}
	&\wc  -O(o_1).\ [1:3]\\
	&\wc  -O(o_2).\ [1:2]\\
	&\ruleatom  Do(o_1),Do(o_2).
\end{align*}
or alternatively:\begin{align*}
	&\wc  -O(o_1).\ [2:2]\\
	&\wc  -O(o_2).\ [1:2]\\
	&\ruleatom  Do(o_1),Do(o_2).
\end{align*}
Depending on the system one is trying to encode either approach may be preferable. For the second approach the weights need to be well chosen. Consider three obligations $o_1,o_2$ and $o_3$ such that $o_3$ is the most important and incompatible with either of the two. When simply choosing weights in ascending order this can be modeled by adding the following code to the common core:\begin{align*}
	&\wc  -O(o_3).\ [3:2]\\
	&\wc  -O(o_2).\ [2:2]\\
	&\wc  -O(o_1).\ [1:2]\\
	&\ruleatom  Do(o_1),Do(o_3).\\
	&\ruleatom  Do(o_2),Do(o_3).
\end{align*}
If one runs this code there would be two possible combinations of obligations given. Either $obl3$ is obligatory or $o_1$ and $o_2$ are obligatory, due to the weights of the violated weak constraints being the same in this case. Depending on the normative system that is being encoded this may be undesirable. So choosing the method for encoding conflicting obligations depends on whether there is a directly preferable obligation or fulfilling multiple obligations may be equally or more preferable.\\
Note that in our methodology all encoded obligations are comparable. Therefore, our methodology is limited to encoding only normative systems for which answer sets are always comparable. As mentioned earlier,
asprin ({\small\url{potassco.org/asprin}})
allows for encodings where obligations respectively answer sets may be incomparable. 

\section{Use Case Encodings} 

\subsection{Methodologically encoding Ross's Paradox, Plato's Dilemma and the Fence Paradox} 
    We encode the previously considered paradoxes using our methodology.
    We start with \textit{Ross's Paradox}. 
	
 \noindent\textbf{Step 1}:
		We start by determining the types of obligations.\begin{itemize}
			\item It is obligatory that the letter is mailed, is a regular obligation.
			\item It is obligatory that the letter is mailed or burned, is a disjunction over obligations.
		\end{itemize}
		The considered actions are:\begin{itemize}
			\item mailing the letter
			\item burning the letter
		\end{itemize}
 \noindent\textbf{Step 2}:
		In this case there are no incompatible actions, therefore we can skip this step.
		
 \noindent\textbf{Step 3}:
		We can now encode the obligation. We obtain:
         \begin{center}
			\boxed{\begin{aligned}
					&\wc  -O(mail).\ [1:2]
			\end{aligned}}\\
   \end{center}
   As the intent behind \textit{Ross's Paradox} is to show that the second statement would follow from the first, we do not need to actually encode the obligation. If we were to encode the second obligation it would be encoded in the following way:\begin{center}
			\boxed{\begin{aligned}
                &\wc  -O(mail),-O(burn).\ [1:2]
			\end{aligned}}\\\end{center}
        Note that adding this obligation would still lead to the same answer sets.
        
		\noindent\textbf{Step 4}:
		This step can once again be skipped.
		
		\noindent\textbf{Step 5}:
		We now encode additional information:\begin{center}
			\boxed{\begin{aligned}
					&act(mail).\\
					&act(burn).
			\end{aligned}}\\\end{center}
 
  Next we look at \textit{Plato's Dilemma}. 
	
\noindent\textbf{Step 1}:
		We start by determining the types of obligations.\begin{itemize}
			\item It is obligatory that I meet my friend for dinner, is a regular obligation.
			\item It is obligatory that I rush my child to the hospital if an emergency happens, is a conditional obligation.
		\end{itemize}
		The considered actions are:\begin{itemize}
			\item meeting the friend
			\item rushing the child to the hospital
		\end{itemize}
 \noindent\textbf{Step 2}:
		In this case the two given actions are incompatible. The obligation to rush the child to the hospital has higher priority.
		
 \noindent\textbf{Step 3}:
		We can now encode the two obligations. We obtain:
  \begin{center}
			\boxed{\begin{aligned}
					&\wc  -O(meet).\ [1:2]\\
                    &\wc  -O(help),Happens(emergency).\ [1:3]
			\end{aligned}}\\\end{center}

 \noindent\textbf{Step 4}:
		We add the impossibility of both helping the child and meeting the friend:
  \begin{center}
			\boxed{\begin{aligned}
					&\ruleatom Do(help), Do(meet).
			\end{aligned}}\\
   \end{center}
 \noindent\textbf{Step 5}:
		We now encode additional information. We denote meeting and helping as actions and the fact that an emergency is happening:\begin{center}
			\boxed{\begin{aligned}
					&act(meet).\\
					&act(help).\\
                    &Happens(emergency).
			\end{aligned}}\\\end{center}
         
Finally we test the methodology on the \textit{Fence Paradox}.
		
 \noindent\textbf{Step 1}:
		We start by determining the types of obligations.\begin{itemize}
			\item There must be no fence, is an obligation with an exception (the exception is if the cottage is by the sea).
			\item If there is a fence, then it must be white is a CTD obligation.
		\end{itemize}
		The considered actions are:\begin{itemize}
			\item having a fence
			\item having a white fence
		\end{itemize}
		Note that one may also consider the actions to be making a fence resp. making a white fence.\\
 \noindent\textbf{Step 2}:
		In this case there are no incompatible actions, therefore we can skip this step.
		
 \noindent\textbf{Step 3}:
		We can now encode the two obligations and their importance. As the CTD obligation and the obligation with exception are not contradicting (as the CTD obligation is only active when the first obligation would be violated anyways), the two obligations can be placed at the same level. Encoding the two obligations we obtain:\begin{center}
			\boxed{\begin{aligned}
					&\wc  -F(have\_fence),\phantom{o}not\phantom{o}Location(sea).[1:2]\\
					&\wc  Do(have\_fence),\phantom{o}-O(have\_white\_fence),\phantom{o}not\phantom{o}Location(sea).\ [1:2]
			\end{aligned}}\\\end{center}
 \noindent\textbf{Step 4}:
		This step can once again be skipped.
		 \noindent
   
 \noindent\textbf{Step 5}:
		We now encode additional information:\begin{center}
			\boxed{\begin{aligned}
					&act(have\_fence).\\
					&act(have\_white\_fence).
			\end{aligned}}\\\end{center}
		One can additionally add the information specifying that having a white fence implies having a fence in the following way:\begin{align*}
			Do(have\_fence)\mathop{\vcenter{\hbox{$:$}}-}Do(have\_white\_fence)
		\end{align*}
		
		As expected, our methodology led to the same encodings for all considered examples.

\subsection{Automotive scenario} 
As an additional showcase scenario we encode
the following normative system containing obligations that hold while driving: 
                \begin{enumerate}
                	\item[O1] It is obligatory to stop if the traffic light is red. 
                	\item[O2] It is obligatory to not impede the
                	traffic flow (by stopping), unless to let a car merge. 
                	\item[O3] It is obligatory to move out of the way when an ambulance approaches. 
                	\item[O4] If you drive during winter it is obligatory to either have winter or all-season tires. 
                	\item[O5] It is obligatory to not cause any damage.
                	\item[O6] It is obligatory to have your drivers license and vehicle registration with you, unless it was stolen and you have proof (of theft). (Only having one is punished the same as having none, as the police has to do the same administrative work.)
                	\item[O7] If one causes damage, it is obligatory to drive directly to the next police station to make a damage report. 
                	\item[O8] It is obligatory to give first-aid, when seeing a medical emergency. 
                \end{enumerate}
                
                We encode the above normative system using our method. We will go through the methodology step by step.\\
                
                \noindent\textbf{Step 1:} We categorise the obligations.
                \begin{itemize}
                	\item O1 is a derived obligation: the obligation to
                	stop is derived when the traffic light is
                	red. Likewise, O3 and O8 are derived obligations.
                	
                	\item O2 is an obligation with an exception which is to let a car merge. Note that a car wanting to merge does not necessitate the car stopping, but it does allow the car to stop should the agent want to.
                	
                	\item O4 is both a derived and a disjunctive obligation. Should the
                	antecedent be fulfilled, one part of the
                	obligation must be satisfied. Note that 
                	not both parts of the obligation can be
                	fulfilled, as one cannot have simultaneously winter tires and all-season tires.
                	
                	\item O5 is a regular obligation, with no additional properties.
                	
                	\item O6 consists of a conjunction of obligations 
                	with an exception, and thus belongs like O4 to multiple categories.
                	
                	\item O7 is a CTD obligation that is active when violating O5.
                	
                \end{itemize}
                \noindent\textbf{Step 2:} Next, we look at pairs of obligations that can't be fulfilled simultaneously. 
                
                \begin{itemize}
                	\item O1 and O2 cannot be fulfilled at the
                	same time, as a red traffic light would
                	commit one to stopping although it is not to
                	let a car merge. We can see O2 as non-contradictory by arguing that one does not impede the flow of traffic by stopping when the traffic light is red. However, for our encoding we will consider the two actions contradictory. We want the agent to derive the obligation to stop.
                	
                	\item O1 and O3 are incompatible, as moving out of the way requires movement that is obviously contradictory to stopping. Here the agent should move out of the way as letting the ambulance pass is of utmost importance.
                	
                	\item O2 and O8 are incompatible, as giving first aid requires stopping the car. Here the obligation to give first aid should be prioritised.
                	
                	\item For the same reason O3 and O8 are incompatible. In this case moving out of the way should be prioritised as the ambulance is more qualified to help in a medical emergency (as they have trained personnel and medical equipment).
                	
                	\item O7 and O8 are contradictory, as one cannot drive directly to the next police station and at the same time stop and give first aid. Once again, stopping to give first aid should be deduced by the agent.
                \end{itemize}
                
                Incompatible combinations of more than two obligations include here
                always one of the pairs above (in general, the latter may not be
                always warranted).
                
                Summarizing the statements above, we obtain the following preferences:
                
                \begin{center}
                	O1 $\succ$  O2, \quad
                	O3 $\succ$  O1, \quad
                	O8 $\succ$  O2, \quad
                	O3 $\succ$  O8, \quad
                	O8 $\succ$  O7.
                \end{center}
                
                \noindent\textbf{Step 3:} Using the above conflicts and priorities, we can
                derive the following weights and levels for the weak constraints
                corresponding to the obligations:
                
                \newcommand{\nop}[1]{}
                
                \nop{***
                	\begin{enumerate}
                		\item It is obligatory to stop if the traffic light is red.\ [1:3]
                		\item It is obligatory to not impede the flow of traffic (by stopping), unless it is to let a car merge.\ [1:2]
                		\item It is obligatory to move out of the way when an ambulance approaches.\ [1:4]
                		\item If you drive during winter it is obligatory to either have winter tires or all-season tires.\ [1:2]
                		\item It is obligatory to not cause any damage.\ [1:2]
                		\item It is obligatory to have your drivers license and vehicle registration with you, unless it was stolen and you have proof (of theft).\ [1:2]
                		\item If one causes damage, it is obligatory to drive directly to the next police station to make a damage report.\ [1:2]
                		\item It is obligatory to give first-aid, when seeing a medical emergency.\ [1:3]
                	\end{enumerate}
                	***}                
                
                \begin{center}
                	\begin{tabular}{c|*{7}{c@{\quad}}c}
                		obligation & O1 & O2 & O3 & O4 & O5 & O6 & O7 & O8 \\ \cline{1-9}
                        $[w:l]$ & [1:3] & [1:2] & [1:4] & [1:2] & [1:2] & [1:2] & [1:2] & [1:3]
                	\end{tabular}
                \end{center}
                Note that an obligation is always placed on the lowest level if it cannot be in conflict with another obligation.
                
                Now we look at the predicates used in the encoding of
                the above system.
                In addition to the predicates in the common
                core, we use the following predicates:
                \begin{itemize}
                	\item \textit{Redlight} means that a red traffic light is active in front of the agent.
                	
                	\item \textit{Winter} means the season being winter.
                	
                	\item \textit{Theft} means that the agent is in possession of proof of theft of his drivers license and/or registration.
                	
                	\item \textit{Licenses} is an auxiliary
                	predicate for formulation of
                	O6, as described in Section~\ref{conj}.
                \end{itemize}
                Furthermore, the following constants are used in the encoding:
                \begin{itemize}
                	
                	\item \textit{merge}, \textit{emergency\_vehicle}, and
                	\textit{medical\_emergency} are events that
                	can happen. Specifically, 
                	\begin{itemize}
                		\item 
                		\textit{Happens(merge)} means a car tries to merge into the
                		lane that the agent is on.
                		
                		\item \textit{Happens(emergency\_vehicle)}
                		means
                		an ambulance (with active emergency lights)
                		approaches the car.
                		
                		\item \textit{Happens(medical)} means a medical emergency happens
                		close to the agent.
                	\end{itemize}
                	
                	\item \textit{stop}, \textit{move}, \textit{equip\_winter}, 
                	\textit{damage}, \textit{carry\_license}, \textit{carry\_registration}, \textit{drive\_police}, 
                	\textit{give\_first\_aid} are actions to be reasoned about. Specifically, 
                	\begin{itemize}
                		\item \textit{Do(stop)} means the car is stopped.
                		\item \textit{Do(move)} means the car needs to move out of the
                		way.
                		\item \textit{Do(equip\_winter)} resp.\  \textit{Do(equip\_allseason)}
                		means the car is equipped with winter tires
                		resp.\ all-season tires. While
                		arguably tire equipment
                		is more of a state
                		than an action, we will use
                		the action view for this example.
                		\item \textit{Do(damage)} means the agent causes damage.
                		\item \textit{Do(carry\_license)} means the agent has his driver's license with him.
                		
                		\item \textit{Do(carry\_registration)} means the agent has the car's registration with him.			
                		
                		\item  \textit{Do(drive\_police)} means the agent drives straight to the next
                		police station.
                		
                		\item  \textit{Do(give\_first\_aid)} means the agent gives first aid.
                	\end{itemize}
                \end{itemize}
                
                \begin{figure}
                	\caption{Encoding for the driving scenario}\label{fig:driving}
                	\begin{subfigure}{\textwidth}
                		\centering
                		\boxed{\begin{aligned}
                				&\wc  Redlight, -O(stop).\ [1:3]\\
                				&\wc  not\phantom{o}Happens(merge), -F(stop).\ [1:2]\\
                				&\wc  Happens(emergency\_vehicle), -O(move).\ [1:4]\\
                				&\wc  Winter, -O(equip\_allseason), -O(equip\_winter).\ [1:2]\\
                				&\wc  -F(damage).\ [1:2]\\
                				&Licenses\ruleatom  O(carry\_license), O(carry\_registration).\\
                				&\wc  not\phantom{o}Licenses, not\phantom{o}Theft.\ [1:2]\\
                				&\wc  Happens(damage), -O(drive\_police).\ [1:2]\\
                				&\wc  Happens(medical\_emergency), -O(give\_first\_aid).\ [1:3]
                		\end{aligned}}
                		\caption{obligations}\label{fig:driving-a}
                		
                	\end{subfigure}
                	
                	\medskip
                	
                	\begin{subfigure}{\textwidth}
                		\centering
                		\boxed{\begin{aligned}
                				&\ruleatom Do(stop), Do(move).\\
                				&\ruleatom Do(drive\_police), Do(give\_first\_aid).
                		\end{aligned}}                        
                		\caption{conflicting actions}\label{fig:driving-b}
                	\end{subfigure}
                	\medskip
                	
                	\begin{subfigure}{\textwidth}
                		\centering
                		\boxed{\begin{aligned}
                				&\ruleatom Theft, Do(carry\_license), Do(carry\_registration).\\
                				&Do(stop)\ruleatom Do(give\_first\_aid).\\
                				&\ruleatom Do(first\_aid), not\phantom{o} Happens(medical\_emergency).\\
                				&act(stop).\\
                				&act(move).\\
                				&act(damage).\\
                				&act(equip\_allseason).\\
                				&act(equip\_winter).\\
                				&act(carry\_license).\\
                				&act(carry\_registration).\\
                				&act(drive\_police).\\
                				&act(give\_first\_aid).
                		\end{aligned}}
                		\caption{action constraints and declarations}\label{fig:driving-c}
                	\end{subfigure}
                \end{figure}
                
                Finally, we can encode our example. We do this by adding the following lines to the common core.
                First, we encode the obligations themselves as shown
                in Figure~\ref{fig:driving-a}, 
                following our methodology. However, recall that two obligations are combinations of different kinds of obligations:
                
                First, O4 is a combination of a derived obligation and a disjunctive obligation. As such, we are able to encode it in the following way:\begin{align*}
                	\wc  Winter, -O(equip\_allseason), -O(equip\_winter).\ [1:2]
                \end{align*}
                This weak constraint can only be violated if \textit{Winter} is
                true. For answer sets where
                \textit{Winter} is true, it is violated by the same answer sets that violate the following weak constraint:\begin{align*}
                	\wc  -O(equip\_allseason), -O(equip\_winter).\ [1:2]
                \end{align*}
                This is the common way of encoding disjunctive obligations. Therefore, the weak constraint can be understood as deriving the disjunctive obligation only when winter is true, thereby encoding the combination of a derived obligation with disjunction.
                
                Second, O6 is a combination of an obligation over a conjunction of
                actions and an obligation with an exception. We once again
                use an auxiliary predicate to encode the conjunction of
                predicates and an exception on top as usual:
                \begin{align*}
                	Licenses\ruleatom  O(carry\_license),
                	O(carry\_registration). & \\
                		\wc  not\phantom{o}Licenses, not\phantom{o}Theft. & \ [1:2]
                	\end{align*}
                	This captures that we either want the exception to be in the answer set or the obligations to have the license and the registration.\\
                	
                	\noindent\textbf{Step 4:} Having encoded the obligations themselves, next the
                	conflicting actions from Step~2 are encoded, shown in Figure~\ref{fig:driving-b}.
                	
                	\noindent\textbf{Step 5:} Finally, the rules and facts about actions in
                	Figure~\ref{fig:driving-c} are added. The first
                	constraint excludes a proof of theft if the
                	agent has both the license and the registration (if one or
                	more are stolen he cannot be in possession of both). It is
                	also clarified that giving first aid implies stopping the
                	car.
                	Furthermore, a constraint prohibits the agent from
                	giving first aid without a medical emergency happening (as
                	this is not possible). Finally, all actions to be
                	reasoned about are declared as acts.
                	
                	We now consider some examples of obligations which are derived in such cases.\\
                	
                	\mypar{Example 1} Assume that an agent is driving during winter after having its driver's license and registration stolen (and having the corresponding confirmation with him). The agent's car is not equipped with all-season tires. Upon driving, the agent comes upon a red light. This information will be denoted in an additional file in the following way:\begin{center}
                		\boxed{\begin{aligned}
                				&Winter.\\
                				&Theft.\\
                				&-Do(equip\_allseason).\\
                				&Redlight.
                	\end{aligned}}\end{center}
                	Two answer sets are generated for this. The difference between the two answer sets is simply whether the agent chooses to directly drive to the police station (as this is not forbidden). Both answer sets derive the same obligations. Note that only the derived obligations will be listed due to the large size of the answer sets:\begin{align*}
                		F(damage), O(stop), O(equip\_winter)
                	\end{align*}
                	The same two obligations and one prohibition are derived that would also be derived using common sense reasoning. The obligation to stop (due to the red light), the prohibition on damaging cars (that is always active) and the obligation to equip winter tires as the agent is not in possession of all-season tires.
                	
                	\medskip		
                	
                	\mypar{Example 2} Assume that an agent is driving when witnessing a medical emergency. Furthermore, an ambulance is approaching with active emergency lighting and a vehicle is trying to merge. This case can be encoded in an additional file in the following way:\begin{center}
                		\boxed{\begin{aligned}
                				&Happens(medical\_emergency).\\
                				&Happens(emergency\_vehicle).\\
                				&Happens(merge).
                	\end{aligned}}\end{center}
                	Multiple answer sets are derived which differ on unimportant details such as whether the agent chooses to equip winter or all-season tires.
                	
                	All answer sets however derive the same obligations:\begin{align*}
                		F(damage), O(move), O(carry\_license), O(carry\_registration)
                	\end{align*}
                	The obligation to move is derived as letting the ambulance pass is of higher importance than treating the medical emergency. The other obligations are obviously active as the exceptions are not given.\\
                	
                	\medskip
                	
                	\mypar{Example 3} Assume that an agent is driving when witnessing a medical emergency after having damaged another car. This test case is encoded in the following way:\\\begin{center}
                		\boxed{\begin{aligned}
                				&Happens(medical\_emergency).\\
                				&Happens(damage).
                		\end{aligned}}\\\end{center}
                	
                	Once again multiple answer sets with minor differences are derived, as in the previous case. However, as expected all answer sets contain the same obligations:\begin{align*}
                		F(damage), O(carry\_license), O(carry\_registration), O(give\_first\_aid)
                	\end{align*}
                	As expected the agent is obligated to give first aid rather than driving directly to the police station.
\subsection{Encoding the norm bases for Pac-man}\label{Pacman}

	Let $(x_1,y_1)$ be Pac-man's coordinates and $(x_2,y_2)$ the coordinates of a frightened ghost right before it is eaten. We start by looking at the three possible states that could precede Pac-man eating a ghost:
	\begin{figure}[t]
		\caption{Pac-man scenarios}\label{fig:pacman-scenarios}
		\begin{subfigure}{\textwidth}
			\centering
              \includegraphics[width=0.5\textwidth]{pacman_case1.png}
             \caption{an example for case $1$} %
			\label{fig:pacman case 1}
		\end{subfigure}
		
		\smallskip
		
		\begin{subfigure}{\textwidth}
			\centering
			\includegraphics[width=0.5\textwidth]{pacman_case2.png}
			\caption{an example for case $2$}
			\label{fig:pacman case 2}
		\end{subfigure}
		\smallskip
		\begin{subfigure}{\textwidth}
			
			\centering
			\includegraphics[width=0.5\textwidth]{pacman_case3.png}
			\caption{an example for case $3$}
			\label{fig:pacman case 3}
		\end{subfigure} 
	\end{figure}
	\begin{enumerate}
		\item Pac-man and the frightened ghost are on the same path or offset by $0.5$ on only on axis (meaning $x_1=x_2\pm0.5$ or $y_1=y_2\pm0.5$)  and the distance between them is at most $1$ on the other axis (meaning $|x_1-x_2|\leq1$ or $|y_1-y_2|\leq1$). An example of this case can be seen in Figure~\ref{fig:pacman case 1}.\\
		There are multiple options that can lead to the ghost being eaten in this situation. Either Pac-man stops and the ghost moves into Pac-man's direction or the ghost stops and Pac-man moves into the ghosts direction or both move towards each other. In this case, forcing Pac-man to move into a direction that is not the direction the ghost is in, suffices to ensure that Pac-man cannot eat the ghost. This can be reformulated as stating that Pac-man is prohibited from stopping or moving towards the ghost.
		
		\item Pac-man and the frightened ghost are on
		the same path or offset by $0.5$ on one axis (meaning
		$x_1=x_2\pm0.5$ or $y_1=y_2\pm0.5$)  and the distance
		between them is at most $2$ on the other axis (meaning
		$|x_1-x_2|\leq2$ or $|y_1-y_2|\leq2$). An example of this
		case can be seen in Figure \ref{fig:pacman case 2}.
		In this case, the ghost could be eaten if the ghost and Pac-man both move towards each other. That ghost cannot be eaten in such a situation if Pac-man does not move in the direction of the frightened ghost. (Stopping is a valid option in this case as long as the distance is more than $1$, else the first case would hold.)
		\item Pac-man and the ghost have a Manhattan distance of at most $2$
		and Pac-man and the ghost are moving towards the same
		corner (in other words $|x_1-x_2|\leq1$ and
		$|y_1-y_2|\leq1$). An example of this case can be seen in
		Figure \ref{fig:pacman case 3}.
		
		In this case, the ghost could be eaten if the ghost and Pac-man both move towards the same space. Therefore, prohibiting Pac-man from moving towards the ghost would stop Pac-man from eating the ghost in this situation.
		
	\end{enumerate}
	
	DLV is capable of handling basic arithmetic, using for example the predicates $+,-,*$. 
     Furthermore, positive integers can be compared using the common comparison operators $<,<=,==,>,>=$.
	
	The code gets updated after every move of the agent and gets passed the following predicates by the game:\begin{itemize}
		\item $Dia(X)$, where $X$ is a direction (north, east, south or west). This predicate denotes that it is possible for Pac-man to move into this direction. (Meaning there is no wall blocking him from moving in that direction.)
		\item $pacman(X,Y)$, where $X$ and $Y$ denote the location of Pac-man on the $x$-axis respectively the $y$-axis. 
		\item $blueGhost(X,Y,Z)$, where $X$ and $Y$ denote the location of the blue ghost on the $x$-axis respectively the $y$-axis. $Z$ is a boolean that denotes that the ghost is scared if $Z=1$. 
		\item $orangeGhost(X,Y,Z)$, where $X,Y,Z$ have the same meanings as for $blueGhost$.
		\item $F(direction)$ will be added when it is impossible for Pac-man to move into that direction (as could be the case when there is a wall in that direction). Note that $F$ is the predicate we use for prohibition. In the code this could be $F(east)$ as an example.
	\end{itemize}
	Using these predicates we use weak constraints to encode the norms forbidding Pac-man from taking the given actions. We do this by encoding a weak constraint for each of the cases mentioned earlier. Note that DLV is only capable of working with positive integer values. Therefore, we double the value of each coordinate. Then, Pac-man always moves two coordinates and a scared ghost will move only one coordinate. In the following cases we will only show a weak constraint for one direction as an example, as the weak constraint is almost the same when the relative positions between Pac-man and the ghost change. 
	\begin{enumerate}
		
		\item In the first possible case, the distance
		between Pac-man and the frightened ghost is
		at most $1$ on one axis and at most $0.5$ on
		the other. We therefore want to forbid
		Pac-man from moving towards the ghost or
		stopping. (As the ghost could move into
		Pac-man if he stops.) We encode this by
		adding the following four weak constraints
		(for each direction the ghost could be in
		relative to Pac-man and for each possible
		shift in the other axis). As an example, we
		show the case where the scared ghost is to
		the right of Pac-man:
		{
			\setlength{\abovedisplayskip}{6pt}
			\setlength{\belowdisplayskip}{\abovedisplayskip}
			\setlength{\abovedisplayshortskip}{0pt}
			\setlength{\belowdisplayshortskip}{3pt}\begin{align*}
				&\wc  pacman(A,B), blueGhost(C,D,1), \\[-2pt]
                &\phantom{\wc }\ C=A-E, E\leq2, D=B-G, G\leq1, -F(east).\ [1:4]\\
				&\wc  pacman(A,B), blueGhost(C,D,1), \\[-2pt]
                &\phantom{\wc }\  C=A-E, E\leq2, D=B-G, G\leq1, -F(stop).\ [1:2]\\
				&\wc  pacman(A,B), blueGhost(C,D,1), \\[-2pt]
                 &\phantom{\wc }\ C=A-E, E\leq2, B=D-G, G\leq1, -F(east).\ [1:4]\\
				&\wc  pacman(A,B), blueGhost(C,D,1), \\[-2pt]
                &\phantom{\wc }\  C=A-E, E\leq2, B=D-G, G\leq1, -F(stop).\ [1:2]
			\end{align*}
		}%
		In the case of the vegan normbase, the same rules need to be added for the orange ghost. The weight of the upper weak constraint is higher as moving towards the ghost is worse than stopping. The weights of the weak constraints are important as we do not want Pac-man to not have any possible moves.
		
		\item In the second case, Pac-man and the
		frightened ghost are on the same path
		(meaning one of their coordinates are
		identical) and their distance is $2$. We
		encode this by adding the following weak
		constraint (for each direction the ghost
		could be relative to Pac-man). As an example,
		we show the case where the scared ghost is
		to the right of Pac-man:%
		{
			\setlength{\abovedisplayskip}{6pt}
			\setlength{\belowdisplayskip}{\abovedisplayskip}
			\setlength{\abovedisplayshortskip}{0pt}
			\setlength{\belowdisplayshortskip}{3pt}\begin{align*}
				&\wc  pacman(A,B), blueGhost(C,D,1), C=A-E,\\[-2pt]
                &\phantom{\wc }\ E\leq4, D=B-G, G\leq1, -F(east).\ [1:3]\\
				&\wc  pacman(A,B), blueGhost(C,D,1), C=A-E, \\[-2pt]
                &\phantom{\wc }\ E\leq4,B=D-G, G\leq1, -F(east).\ [1:3]
			\end{align*}
		}%
		In the case of the vegan normbase, the same rule needs to be added for the orange ghost. The weights of the weak constraints are chosen as stopping is preferred over moving towards the direction of the ghost when both options are not optimal.
		\item In the third and final case, Pac-man and
		the ghost have a Manhattan distance of $2$
		but Pac-man and the ghost are not on the same
		path. (Intuitively, Pac-man is around the
		corner of the ghost.) We encode this by
		adding the following two weak constraints
		(for each direction the ghost could be in
		relative to Pac-man). As an example, we show
		the case where the scared ghost is above and
		to the right of Pac-man:%
		{
			\setlength{\abovedisplayskip}{6pt}
			\setlength{\belowdisplayskip}{\abovedisplayskip}
			\setlength{\abovedisplayshortskip}{0pt}
			\setlength{\belowdisplayshortskip}{3pt}\begin{align*}
				&\wc  pacman(A,B), blueGhost(C,D,1), \\[-2pt]
                &\phantom{\wc }\ C=A-E, E\leq2, D=B-G, G\leq2, -F(east).\ [1:3]\\
				&\wc  pacman(A,B), blueGhost(C,D,1), \\[-2pt]
                &\phantom{\wc }\  C=A-E, E\leq2, D=B-G, G\leq2, -F(north).\ [1:3]
			\end{align*}
		}%
		
		In the case of the vegan normbase, the same rule needs to be added for the orange ghost. The weights of the weak constraints are chosen as stopping is preferred over moving towards the direction of the ghost when both options are not optimal.
	\end{enumerate}
	
	Finally, we also want to ensure that Pac-man always has at least one valid move (stopping does count as a move), so we add the following rule:\begin{align*}
	&\ruleatom F(north),F(east),F(south),F(west),F(stop).
	\end{align*}
	Note that this is not a weak constraint, as it is not possible for Pac-man to choose none of these options.\\
	
	By abstracting the above information, the vegan and vegetarian norm base can be encoded. 
 
   The full DLV encodings of the Pac-man agent can be found in~\cite{Master}.
Notably, due to an issue with the current implementation of JDLV, some functionalities of DLV were not working. 
We therefore had to outsource the arithmetic from DLV to Java. E.g., rather than only using the positions of Pac-man and the ghosts to reason about what directions Pac-man is prohibted from moving in, Java introduced additional predicates that informed the reasoner about whether Pac-man was too close to a frightened ghost and if so how close.
\fi
\end{document}